\newtheorem*{theorem*}{Theorem}
 \journalname{}
\begin{document}

\title{Modeling competitive interactions and plant-soil feedback in vegetation dynamics
}


\author{A. Marasco         \and
        F. Giannino \and A. Iuorio 
}


\institute{A. Marasco \at
              Department of Mathematics and Applications, University of Naples Federico II,  Complesso Universitario di  Monte S. Angelo, via Cintia, Naples,  80126, Italy \\
              \email{marasco@unina.it}           
           \and
           F. Giannino \at
              Department of Agricultural Sciences, University of Naples Federico II, via Universit\`a 100, Portici (NA),  80055, Italy\\
              \email{giannino@unina.it} 
           \and
           A. Iuorio \at
              Johann Radon Institute for Computational and Applied Mathematics (RICAM),
              Vordere Zollamtstra{\ss}e 3, 1030 Vienna, Austria\\
              \email{annalisa.iuorio@ricam.oeaw.tuwien.ac.at}
}

\date{}

\maketitle

\begin{abstract}
Plant-soil feedback is recognized as a causal mechanism for the emergence of vegetation patterns of the same species especially when water is not a limiting resource (e.g. humid environments) \cite{Carteni_2012,Marasco_2014}. Nevertheless, in the field, plants rarely grow in monoculture  but compete with other plant species. In these cases, plant-soil feedback was shown to play a key role in plant-species coexistence \cite{Mazzoleni_2010}.
Using a mathematical model consisting of four PDEs, we investigate mechanisms of inter- and intra-specific plant-soil feedback on the coexistence of two competing plant species. In particular, the model takes into account both negative and positive feedback influencing the growth of the same and the other plant species.
Both the coexistence of the plant species and the dominance of a particular plant species is examined with respect to all model parameters together with the emergence of spatial vegetation patterns. 
\keywords{Plant-soil feedback \and Stability analysis \and Bifurcation analysis \and Spato-temporal pattern \and Numerical simulations}
\end{abstract}

\section{Introduction} \label{intro}
Understanding vegetation dynamics is fundamental in order to foresee the evolution of an ecosystem, in particular its resilience and ability to preserve biodiversity. To this aim, extensive studies have been performed in the last decades to investigate how plant species interact with each other (through competition for resources) and with the surrounding environment (through positive and negative feedback induced by different environmental factors). In water-limited environments (e.g.~savannas), several ecological mechanisms have been proposed to justify coexistence of herbaceous and woody plant species, such as resource niche separation and ecosystem disturbances (see, for instance, \cite{Baudena_2010,Walter_1971}). Moreover, water-vegetation feedbacks in the form of short-range facilitation and long-range inhibition, as well as self-organisation principles, have been shown to be play a key role in the formation of mono- (see \cite{Klausmeier_1999,Meron_2012,Rietkerk_2002,Sherratt_2005,von_Hardenberg_2001}) and multi-species   (e.g., \cite{Baudena_2012,Eigentler_2019_2,Eigentler_2019,Gilad_2007}) vegetation patterns in arid environments. The above mentioned factors, however, fail to describe the connection between species diversity and latitude - a topic which has been of a focus of interest of the scientific community for a long time \cite{Darwin_1909,Weigelt_2003,Willig_2003}. \\
A factor that proved in recent years to play an important role in shaping natural plants communities is plant-soil negative feedback (see, for instance,~\cite{Bever_1994,Eppinga_2018,van_der_Putten_1993}). Causes for this negative feedback include the presence of soil pathogens, the changing composition of soil microbial communities~\cite{Bever_1994,Kulmatiski_2008,van_der_Putten_1993} and the accumulation of autotoxic compounds from decomposing plant litter~\cite{Bonanomi_2011,Mazzoleni_2007}. Plant-soil negative feedback has been shown to be very relevant in spatio-temporal dynamics of plant systems, particularly in the spatial organisation of plants by means of clonal rings~\cite{Bonanomi_2014,Carteni_2012} and patterns~\cite{Marasco_2014}, and in species coexistence \cite{Bonanomi_2005,Eppinga_2018,Mazzoleni_2010}. The governing process in both cases is the attempt of the biomass to ``escape'' areas with a high concentration of toxic compounds. For species growing in humid environments (e.g.~clonal plants), this area is located at the center of the tussock \cite{Bonanomi_2014}. These studies are able to reveal interesting features: in environments where litter decomposition is rapid and nutrient cycles are closed we have species' richness, on the other hand we have single species dominance where litter decomposition is slow and/or negative feedback is removed from the nutrient cycle pathway. Nevertheless, so far they have only focused on the effect of intra-specific plant-soil negative feedback induced by toxicity. Moreover, a combination of inter-specific feedback induced by toxicity and spatial effects -- which are known to play a prominent role in self-organisation processes for vegetation (see \cite{Eigentler_2019,Kealy_2011,Marasco_2014} and references therein) -- has not been considered before.

In this work, we consider for the first time both inter- and intra-specific plant-soil feedbacks induced by toxicity in a spatial model describing the interaction of two plant species, and their two relative \textit{toxicities}. We assume that the nature of the inter-specific feedback can be positive or negative, depending on the phylogenetic distance between the two species \cite{Mazzoleni_2014}. We will show how the interplay of these effects can lead to the emergence of alternative ecological scenarios  such as competitive exclusion and species coexistence.  This last scenario can occur both when the  species live in the same area at any time, as well as when they are continuously alternating in space and time. In mathematical terms, this corresponds to the emergence of stable coexistence equilibria and spatio-temporal patterns, respectively.

The paper is structured as follows: the mathematical model and the main ecological assumptions are introduced in Section \ref{sec:2}, while its nondimensional version is presented in Section \ref{sec:3}. Section \ref{sec:4} is devoted to the investigation of ecologically feasible equilibria and linear stability analysis to spatially homogeneous perturbations, whereas spatially heterogeneous perturbations are considered in Section \ref{sec:5}. Numerical simulations  concerning the influence of the main parameters on dynamic of the model are illustrated in Section \ref{sec:6}. Conclusions and research perspectives conclude the manuscript.

\section{The mathematical model} \label{sec:2}
To investigate the effects of inter- and intra-specific plant-soil feedback on the
coexistence of two competing plant species in an environment where water is not a limited resource,  we consider the biomasses $B_1$ and $B_2$ and the corresponding toxicities $T_1$ and $T_2$ in a bidimensional domain (see Fig.~\ref{fig:moddes}). 
\begin{figure}[H]
    \centering
    \includegraphics[scale=0.5]{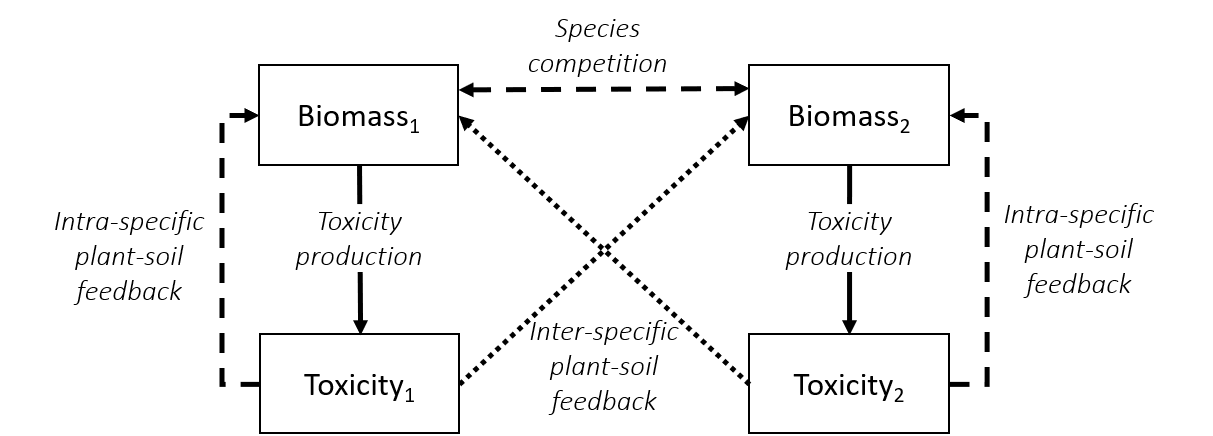}
    \caption{Schematic representation of the examined ecological scenario. Solid lines represent (positive) mass flows, dashed lines indicate the (negative) effect exerted by auto-toxicity on the species producing it, and dotted lines describe the (positive or negative) interaction of the biomass with toxicity produced by the other species.}
    \label{fig:moddes}
\end{figure}
We assume that the plant biomass of the $i$-th species $B_i$ evolves according to a growth rate parameter $a_i$ (time$^{-1}$), a constant death rate $d_i$ (time$^{-1}$), and an extra loss due to the negative plant–soil feedback induced by $T_i$ by means of $s_{ii}$ (m$^2$ kg$^{-1}$ time$^{-1}$). An interspecific competition between $B_i$ and $B_j$ -- whose strength is measured by the parameters $a_{ij}$ -- is also included. Moreover, we assume that the toxicity
produced by each species can have either a positive or a negative feedback on the other species, and that the situation is symmetrical, i.e., if $T_1$ influences $B_2$ positively, so does $T_2$ with $B_1$, and vice versa. To this aim, we introduce the coefficient $s_{ij}$ representing the sensitivity of species $i$ to the toxicity produced by the species $j$. Moreover, assuming that $sign(s_{ji} )= sign(s_{ij})$ for all $i\neq j$, we have
\begin{enumerate}
\item if $s_{ij}<0$, the toxicity $T_j$ has a positive feedback on the species $B_i$, and consequently $s_{ji} < 0$. For instance, $B_i$ can find additional nutrients in the toxicity produced by $B_j$;

\item if $s_{ij} > 0$, the toxicity $T_j$ has a negative feedback on the species $B_i$, and also $s_{ji}>0$;

\item if $s_{ij}=0$, the toxicity $T_j$ do not affect the dynamics of $B_i$, and consequently $s_{ji} = 0$. 
\end{enumerate}
Finally, plants vegetative spatial dispersal is modelled by a diffusion term with coefficient $D_i$ (time$^{-1}$). \\
The concentration of toxic compounds $T_i$ produced by species $i$ is determined by a fraction $c_i$ (time$^{-1}$) of the biomass $B_i$ and decreases because of litter removal/decay processes, which are represented by the parameter $k_i$ (time$^{-1}$). For sake of simplicity, no diffusion of toxicity is considered by the model. 

Then, the PDEs model write
\begin{equation}
\left\{
\begin{array}{l}
\displaystyle\frac{\partial B_{1}}{\partial t}=\underset{\text{growth}}{\underbrace{%
a_{1}B_{1}}}-\underset{\text{mortality and extra-mortality}}{%
\underbrace{\left( d_{1}+s_{11}T_{1}\right) B_{1}}}-\underset{\text{%
competition with }B_{2}}{\underbrace{a_{12}B_{1}B_{2}}}-\underset{\text{interaction with }T_{2}}{\underbrace{s_{12}T_{2}B_{1}}}+\underset{\text{%
dispersal}}{\underbrace{D_{1}\Delta B_{1}}},\medskip  \\
\displaystyle\frac{\partial B_{2}}{\partial t}=\underset{\text{growth}}{\underbrace{%
a_{2}B_{2}}}-\underset{\text{mortality and extra-mortality}}{%
\underbrace{\left( d_{2}+s_{22}T_{2}\right) B_{2}}}-\underset{\text{%
competition with }B_{1}}{\underbrace{a_{21}B_{1}B_{2}}}-\underset{\text{interaction with }T_{1}}{\underbrace{s_{21}T_{1}B_{2}}}+\underset{\text{%
dispersal}}{\underbrace{D_{2}\Delta B_{2}}},\medskip  \\
\displaystyle\frac{\partial T_{1}}{\partial t}=\underset{\text{production}}{\underbrace{%
c_{1}B_{1}}} -\underset{\text{decomposition}}{\underbrace{%
k_{1}T_{1}},}\medskip
\\
\displaystyle\frac{\partial T_{2}}{\partial t}=\underset{\text{production}}{\underbrace{%
c_{2}B_{2}}} -\underset{\text{decomposition}}{\underbrace{%
k_{2}T_{2}},}%
\end{array}%
\right. \label{1}
\end{equation}
where all coefficients are positive except for $s_{12}$ and $s_{21}$. 

For a detailed description of the model variables and parameters, together with their units, we refer to Table~\ref{tab:par}.

\begin{table}[ht!]
\centering
\begin{tabular}{ccc}
\hline\hline \textbf{Symbol} & \textbf{Description} & \textbf{Unit}
\\ \hline\hline $B_{i}$ & \multicolumn{1}{l}{plant biomass of $i$-th
species} &
\multicolumn{1}{l}{kg m$^{-2}$} \\
$T_{i}$ & \multicolumn{1}{l}{toxicity produced by $i$-th species} &
\multicolumn{1}{l}{kg m$^{-2}$} \\
$a_{i}$ & \multicolumn{1}{l}{growth rate of $B_{i}$} & \multicolumn{1}{l}{
time$^{-1}$} \\
$d_{i}$ & \multicolumn{1}{l}{death rate of $B_{i}$} & \multicolumn{1}{l}{time%
$^{-1}$} \\
$s_{ij}$ & \multicolumn{1}{l}{sensitivity of $B_{i}$ to $T_{j}$} &
\multicolumn{1}{l}{m$^{2}$ kg$^{-1}$ time$^{-1}$} \\
$a_{ij}$ & \multicolumn{1}{l}{competition between $B_{i}$ and $B_{j}$} &
\multicolumn{1}{l}{m$^{2}$ kg$^{-1}$ time$^{-1}$} \\
$D_{i}$ & \multicolumn{1}{l}{dispersal coefficient of $B_{i}$} &
\multicolumn{1}{l}{time$^{-1}$} \\
$c_{i}$ & \multicolumn{1}{l}{proportion of toxic products by litter
decomposition of $i$-th species} & \multicolumn{1}{l}{time$^{-1}$} \\
$k_{i}$ & \multicolumn{1}{l}{decay rate of $T_{i}$} & \multicolumn{1}{l}{time%
$^{-1}$} \\ \hline
\end{tabular}
\caption{List of model variables and parameters with their units}
\label{tab:par}
\end{table}

System~(\ref{1}) is defined on the bounded domain $\Omega\subset \mathbb{R}^{2}$ and is equipped with the following initial and boundary conditions
\begin{subequations} \label{eq:ombic}
\begin{gather}
\begin{split}
 B_{i}\left( \mathbf{x},0\right) =B_{i,0}\left( \mathbf{x}\right)
,\quad T_{i}\left( \mathbf{x},0\right) =T_{i,0}\left( \mathbf{x}%
\right), \quad \quad \quad \mathbf{x}\in \Omega, \label{IC}
 \\
\end{split}
\intertext{and}
 \partial _{n}B_{i}=0,\quad \partial _{n}T_{i}=0,\quad \quad \quad \mathbf{x}%
\in \partial \Omega ,\quad t\in \mathbb{R}^{+}, \label{BC}
\end{gather}
\end{subequations}
where $\partial \Omega$ is the boundary of $\Omega$, $\partial _{n}$ is the normal derivative on $\partial \Omega$, and $i=1,2$.

\subsection{Assumptions on the model parameters} \label{ssec:ass}
From now on, we focus our attention on an ecological scenario in which two \textit{similar species} (i.e., phylogenetically close and with the same capacity to interact with the surrounding environment) interact in the same habitat (e.g. two shrubs species). In this context, we assume that:
\begin{itemize}
\item The intrinsic growth rates of $B_{1}$ and $B_{2}$ are equal, i.e.,
\begin{equation}
   a_{1}-d_1=a_{2}-d_2. \label{A1}
\end{equation}

\item The decay rates of $T_{1}$ and $T_{2}$ are equal, i.e.,
\begin{equation}
   k_{1}=k_{2}. \label{A2}
\end{equation}
\item The sensitivity of $B_{i}$ to $T_{j}$ is proportional to the sensitivity of $B_{j}$ to $T_{j}$, for all $i$ and $j\neq i$, i.e.,
\begin{equation}
   s_{12}=K s_{22}, \quad s_{21}=K s_{11}, \label{A3}
\end{equation} where $K$ is a constant (positive, negative or null). This is equivalent to assuming that the effect induced by $T_j$ on $B_i$ (represented by $s_{ij}$) is proportional to the effect that  $T_j$ has on the biomass $B_j$ (i.e. $s_{jj}$). Owing to system~\eqref{1}, when $K>0$ we have that $T_j$ has a negative influence on $B_i$, while if $K<0$ the presence of $T_j$ enhances the growth of $B_i$. On the other hand, if $K=0$ no inter-specific toxicity-biomass interactions are considered.
\end{itemize}

\section{Nondimensional analysis} \label{sec:3}

We remark that system (\ref{1}) depends on $16$ parameters. Assuming conditions (\ref{A1})--(\ref{A3}) the resulting system will depend on $11$ parameters as follows
\begin{equation}
\left\{
\begin{array}{l}
\displaystyle\frac{\partial B_{1}}{\partial t}=
g_{1}B_{1}-s_{11}T_{1}B_{1}-a_{12}B_{1}B_{2}-Ks_{22}T_{2}B_{1}+ D_1 \Delta B_1,\medskip  \\
\displaystyle\frac{\partial B_{2}}{\partial t}=
g_{1}B_{2}- s_{22}T_{2} B_{2}-a_{21}B_{1}B_{2}-Ks_{11}T_{1}B_{2}+ D_2 \Delta B_2,\medskip  \\
\displaystyle\frac{\partial T_{1}}{\partial t}=c_{1}
B_{1}-k_{1}T_{1},\medskip
\\
\displaystyle\frac{\partial T_{2}}{\partial t}=c_{2}
B_{2}-k_{1}T_{2},%
\end{array}%
\right. \label{5}
\end{equation}
where $g_{1}=a_1-d_1$.

We note that the functions $B_i$ and $T_i$, $i=1,2$, evolve on two different time scales. Then, if we perform a nondimensional analysis introducing two reference quantities for the time, we will obtain a dimensionless system depending on $4$ parameters. 

In detail, we introduce the following reference quantities for the biomasses and toxicities
\begin{equation}
\displaystyle B_{0,1}= \frac{g_1}{a_{21}},\quad B_{0,2}= \frac{g_1}{a_{12}},\quad T_{0,1}=\frac{a_1-d_1}{s_{11}},\quad T_{0,2}=\frac{a_1-d_1}{s_{22}},
\label{6}
\end{equation}
and we choose two reference quantities for the time evolution of $B_i$ and $T_i$, respectively, and a reference length as follows
\begin{equation}
\displaystyle \tau_B = \frac{1}{a_1-d_1},\quad\tau_T = \frac{a_{21}}{c_1 s_{11}}, \quad l= \sqrt{\frac{D_1}{a_1-d_1}}.
\label{7}
\end{equation}

From \eqref{7}, we introduce the nondimensional quantities $\hat{t}_B =t/\tau_B$, $\hat{t}_T =t/\tau_T$, and $\hat{Q_i}=Q_i/Q_{0,i}$, as well as the nondimensional parameters
\begin{equation}
 K=\frac{s_{12}}{s_{22}}=\frac{s_{21}}{s_{11}}, \quad T=\frac{a_{21} c_2 s_{22}}{a_{12} c_1 s_{11}}, \quad S=\frac{a_{21} k_1}{c_1 s_{11}}, \quad  D_B=\frac{D_2}{D_1}.
\label{9}
\end{equation}
It is then an easy exercise to verify that the nondimensional form of Eqs.~\eqref{5} is
\begin{equation}
\left\{
\begin{array}{l}
\displaystyle\frac{\partial B_{1}}{\partial t_B} = B_1-B_1 B_2-B_1 T_1-K B_1 T_2 + \Delta B_1,\medskip  \\
\displaystyle\frac{\partial B_{2}}{\partial t_B} = B_2-B_1 B_2 -B_2 T_2 - K B_2 T_1 + D_B \Delta B_2,\medskip  \\
\displaystyle\frac{\partial T_{1}}{\partial t_T} = B_1-S T_1,\medskip
\\
\displaystyle\frac{\partial T_{2}}{\partial t_T} = T B_2-S T_2,%
\end{array}%
\right. \label{8}
\end{equation}
where we have dropped the superscript for sake of simplicity.

All nondimensional parameters in Eq.~(\ref{9}) are positive except for $K$ that can be negative, positive, or null  (see Section \ref{ssec:ass}). We remark that $K$ is the ratio of the effect of $T_j$ on species $B_i$ over the effect of $T_j$ on species $B_j$, i.e., the ratio between the interspecific toxicity and autotoxicity.
Moreover, $T$ is a parameter that measures the impact of competition and autotoxicity on $B_2$ in relation to the impact of the same factors on $B_1$ mediated by the growth rate of each toxicity. In particular, when $T>1$ the above negative effects on species 2 are stronger than on species 1, whereas for $T=1$ such effects are indistinguishable. The parameter $S$ plays the role of the decomposition rate of the toxicities in this nondimensional scheme. Finally,  $D_B$ represents an indirect measure of  the dispersal rates of the biomasses.

The parameter $T$ plays a key role in the dynamics of the model. Consequently, we analyze the existence and the stability properties of coexistence equilibria both for $T=1$ and $T\neq 1$.

\section{Equilibria and  linear stability analysis under spatially homogeneous perturbation}\label{sec:4}
The corresponding nondimensional local system of (\ref{8}) writes
\begin{equation}\label{8ODE}
\left\{
\begin{array}{l}
\displaystyle\frac{d B_{1}}{d t_B} = B_1-B_1 B_2-B_1 T_1-K B_1 T_2, \medskip\\
 \displaystyle\frac{d B_{2}}{d t_B} = B_2-B_1 B_2 -B_2 T_2 - K B_2 T_1, \medskip\\
 \displaystyle\frac{d T_{1}}{d t_T} = B_1-S T_1, \medskip\\
 \displaystyle\frac{d T_{2}}{d t_T} = T B_2-S T_2,
\end{array}
\right.
\end{equation}
where, for convenience, we omitted the superscript.
\subsection{Equilibria}
Biologically feasible homogeneous equilibrium
configuration $E_i\equiv\left(B^{*}_{1,i},B^{*}_{2,i},T^{*}_{1,i},T^{*}_{2,i}\right)$ of Eqs.~(\ref{8ODE}) are non-negative solutions of the following system of algebraic equations
\begin{equation}
\left\{
\begin{array}{l}
{B}_1-{B}_1 {B}_2-{B}_1 {T}_1-K {B}_1 {T}_2 =0, \medskip\\
 {B}_2-{B}_1 {B}_2 -{B}_2 {T}_2 - K {B}_2 {T}_1 =0, \medskip\\
{B}_1-S {T}_1=0, \medskip\\
{T} {B}_2-S {T}_2=0.
\end{array}
\right.\label{10}
\end{equation}
System (\ref{8ODE}) always admits three (non-negative) \textit{non-coexistence equilibrium configurations} 
\begin{equation}\label{11}
E_0= \left(0,0,0,0\right), \quad
E_1=\left(0,\frac{S}{T},0,1\right), \quad
E_2=\left(S,0,1,0\right),
\end{equation}
whereas it can exhibit only one or infinite coexistence equilibria as shown in the following result.

\begin{proposition} [\textit{Existence of the coexistence equilibria}] \label{prop1} Suppose that $S,T>0$. 
\begin{enumerate}[label=(\roman*)]
    \item If $K\geq 1$, system (\ref{8ODE}) admits one and only one coexistence equilibrium 
    \begin{equation}\label{12}
        E_3=\left(S\Psi,S\Theta,\Psi,T\Theta\right),
        \end{equation}
    where \begin{equation}\label{12b}\Psi=\frac{(K-1) T+S}{(K+S) (K T+S)-T}, \quad \Theta=\frac{K+S-1}{(K+S) (K T+S)-T}.\end{equation}
    \item If $K<1$ and $K\neq 1-S$, system (\ref{8ODE}) exhibits the coexistence equilibrium $E_3$ provided that at least one of the following mutually exclusive conditions is satisfied:
    \begin{enumerate}
        \item $-1\leq K<1,\;S<1-K, \; \displaystyle T>\frac{S}{1-K}$ \; \; \; or \; \; \;$-1\leq K<1, \;S>1-K, \; \displaystyle T<\frac{S}{1-K}$,\\
        \item $K < -1,\;\displaystyle S < \frac{1 - K^2}{K},\; \displaystyle \frac{S}{1- K} < 
     T < \displaystyle\frac{S (K + S)}{1-K (K + S)}$,
     \item $K < -1,\;\displaystyle\frac{1-K^2}{K}\leq S<1-K, \;\displaystyle T>\frac{S}{1-K}$ \; \; \; or \; \; \;$K < -1,\;\displaystyle S>1-K,\;T<\frac{S}{1-K}.$
    \end{enumerate}
    
    \item  If $K<1$ and $K=1-S$, system (\ref{8ODE}) has infinite coexistence equilibria 
\begin{equation}\label{13}
  \left(B^*_1,\;S-B^*_1,\; \frac{B^*_1}{S},\;\frac{S-B^*_1}{S}\right), \quad 0<B^*_1<S,
\end{equation}
provided that $T=1$.
\end{enumerate}
 \end{proposition}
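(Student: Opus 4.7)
The plan is to reduce the algebraic system (10) to a two-dimensional linear system in $B_1, B_2$ and to analyze when its unique solution (or, in the degenerate case, a one-parameter family) lies in the open positive quadrant. Since coexistence requires $B_1, B_2 > 0$, the last two equations of (10) give $T_1 = B_1/S$ and $T_2 = TB_2/S$. Substituting these into the first two equations and dividing by $B_1$ and $B_2$ yields
\begin{equation*}
B_1 + (S + KT)B_2 = S, \qquad (S+K)B_1 + T B_2 = S,
\end{equation*}
whose determinant is $\Delta = T - (K+S)(KT+S)$. When $\Delta \neq 0$, Cramer's rule produces $B_1 = S\Psi$, $B_2 = S\Theta$ with $\Psi, \Theta$ as in (12b), so existence of $E_3$ reduces to showing that the two numerators $(K-1)T + S$ and $K+S-1$ share the sign of $-\Delta = (K+S)(KT+S) - T$.

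For part (i), with $K\geq 1$ both numerators are manifestly nonnegative and strictly positive since $S, T > 0$, and a short rearrangement gives $-\Delta = T(K^2 - 1) + KS(1+T) + S^2 > 0$. Hence $E_3$ is always a coexistence equilibrium in this regime.

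For part (ii) the bulk of the work is a case analysis. I would regard $-\Delta(T) = T\,(K(K+S) - 1) + S(K+S)$ as an affine function of $T$ and track its sign alongside the two numerators. The numerators share a common sign in exactly two regimes: $S > 1-K$ with $T < S/(1-K)$ (both positive), and $S < 1-K$ with $T > S/(1-K)$ (both negative). Whether $-\Delta$ attains the required sign is governed by the slope $K(K+S) - 1$, which a short argument (splitting on the sign of $K$ and using $K+S < 1$ in the negative-numerator branch) shows is always negative when $-1\leq K<1$; this yields case (ii)(a). For $K<-1$ the slope can become positive, introducing an upper bound on $T$. The threshold separating the two regimes is precisely $S = (1-K^2)/K$, producing the split between case (ii)(b) (positive slope, with $T$ forced below $S(K+S)/(1 - K(K+S))$ to keep $-\Delta<0$) and case (ii)(c) (nonpositive slope, structurally analogous to (ii)(a)). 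Nonemptiness of every $T$-interval is then checked by evaluating $-\Delta$ at $T = S/(1-K)$, where a direct computation gives $-S(1-K-S)/(1-K)$.

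For part (iii), observe that with $K = 1 - S$ the linear system becomes $B_1 + (S + (1-S)T)B_2 = S$ and $B_1 + TB_2 = S$; subtracting yields $S(1-T)B_2 = 0$, so either $B_2 = 0$ (excluded for coexistence) or $T = 1$. When both $K = 1-S$ and $T=1$ hold, the two equations collapse to $B_1 + B_2 = S$, producing the one-parameter family (13) via $T_1 = B_1/S$ and $T_2 = (S-B_1)/S$. The hard part will be making the case analysis in (ii) water-tight: tracking the sign of $-\Delta(T)$ as its slope changes sign, verifying nonemptiness of every $T$-interval, and handling the boundary cases ($K = -1$, $S = (1-K^2)/K$, $S = 1-K$) consistently with the strict and non-strict inequalities stated in the proposition.
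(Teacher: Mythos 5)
Your approach---reducing to the $2\times 2$ linear system in $(B_1,B_2)$, solving by Cramer's rule to recover $\Psi$ and $\Theta$, and then characterizing positivity by comparing the signs of the two numerators with the affine-in-$T$ denominator $-\Delta$---is exactly the route the paper takes; the paper merely states that $E_3$ is admissible iff $\Psi,\Theta>0$ and leaves the entire case analysis implicit, so your proposal is a correct and fuller version of the same proof. One small imprecision: the slope $K(K+S)-1$ is \emph{not} always negative for $-1\leq K<1$ (e.g.\ $K=0.9$, $S=10$ gives a positive slope), but it is negative on the only branch where you actually invoke it ($S<1-K$, hence $K+S<1$), and the positive-numerator branch is covered by your endpoint evaluation $-\Delta\big(S/(1-K)\big)=-S(1-K-S)/(1-K)$ together with the positive intercept $S(K+S)$, so the argument stands.
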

\begin{proof}
 The items \textit{(i)} and \textit{(ii)} can be easily deduced noting that the equilibrium $E_3$ is admissible only when  $\Psi$ and $\Theta$ defined in Eq.~(\ref{12b}) are positive, whereas item \textit{(iii)} follows from solving Eq.~(\ref{10}) when $K=1-S$ and $T=1$. 
 \end{proof}

\subsection{Linear stability analysis of  non-coexistence equilibrium configurations}
In this section, we determine the conditions under which the equilibrium configurations (\ref{11}) are locally unstable or asymptotically stable. 

To analyze the linear stability of equilibria we note that the Jacobian matrix of the system (\ref{10}) is given by
\begin{equation} \label{jacG}
J=\left(
\begin{array}{cccc}
 -{B_2}-{T_1}-K {T_2}+1 & -{B_1} & -{B_1} & -{B_1} K \\
 -{B_2} & -{B_1}-K {T_1}-{T_2}+1 & -{B_2} K & -{B_2} \\
 1 & 0 & -S & 0 \\
 0 & T & 0 & -S \\
\end{array}
\right).
\end{equation}

\hspace{1in}

\begin{proposition} [Linear stability properties of $E_0,E_1,E_2$] \label{prop2}
Let $S,T$ be positive and $K$ be a real parameter. Then, we have
\begin{enumerate}
    \item[(1)] The equilibrium $E_0$ (bare soil) is always locally unstable.
    \item[(2)] The equilibrium $E_1$ is locally asymptotically stable if and only if  $\displaystyle K>1-\frac{S}{T}$. 
    
    \item[(3)] The equilibrium $E_2$ is locally asymptotically stable if and only if $K>1-S$.
    
    \item[(4)] system (\ref{8ODE}) undergoes a transcritical bifurcation at $E_1$ when $\displaystyle K\equiv K_{TB,1}=1-\frac{S}{T}$, that is backward if $\displaystyle T < \frac{S}{1+S}$, and forward if $\displaystyle T > \frac{S}{1+S}$.\vspace{0.2cm} Moreover, an additional backward transcritical bifurcation occurs at $E_2$ when $K\equiv K_{TB,2}=1-S$. In both cases, the exchange of stability properties occurs with respect to $E_3$.
\end{enumerate}
 \end{proposition}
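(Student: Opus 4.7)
The plan is to evaluate the Jacobian $J$ of (\ref{jacG}) at each of the three axial equilibria and read the eigenvalues directly off the resulting sparsity pattern, then apply a standard transcritical-bifurcation argument at the two critical values of $K$.

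At $E_{0}=(0,0,0,0)$ every off-diagonal entry of $J$ carrying a factor of $B_{i}$ or $T_{i}$ vanishes, so $J(E_{0})$ is block-triangular with spectrum $\{1,1,-S,-S\}$; the positive eigenvalue $1$ immediately yields claim (1). At $E_{1}=(0,S/T,0,1)$, $B_{1}^{*}=0$ annihilates $J_{12},J_{13},J_{14}$, so the first row of $J(E_{1})$ reduces to $(1-S/T-K,0,0,0)$ and $\lambda_{1}:=1-S/T-K$ is one eigenvalue. Striking row and column $1$ and then expanding along the sparse row associated with the $T_{1}$-equation splits off the eigenvalue $-S$, leaving a $2\times 2$ block with characteristic polynomial $\lambda^{2}+S\lambda+S$, whose two roots have real part $-S/2<0$ for $S>0$; hence $E_{1}$ is asymptotically stable iff $\lambda_{1}<0$, which is claim (2). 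The analogous calculation at $E_{2}=(S,0,1,0)$, with $B_{2}^{*}=0$ sparsifying the second row instead of the first, produces the critical eigenvalue $1-S-K$ together with the same set of stable accompanying eigenvalues, proving (3).

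For claim (4) I would first verify, by plugging $K=K_{TB,1}=1-S/T$ into (\ref{12b}), that $\Psi=0$ and $\Theta=1/T$, so that $E_{3}$ collides with $E_{1}$ at the bifurcation value; an identical check at $K=K_{TB,2}=1-S$ gives the collision $E_{3}\to E_{2}$. The critical eigenvalue depends linearly on $K$ with non-zero slope $-1$, so the zero crossing is transverse and the remaining spectrum stays hyperbolic, placing us in the hypotheses of Sotomayor's transcritical bifurcation theorem and yielding exchange of stability along the branch $E_{3}$. To distinguish backward from forward, I would perform a one-dimensional centre-manifold reduction: take $w=(1,0,0,0)^{\top}$ as the left critical eigenvector at $E_{1}$ and compute the corresponding right eigenvector $v$ from $J(E_{1})v=0$, then project the quadratic nonlinearity onto $w$ to obtain the normal form $\dot z=\mu z+\beta z^{2}+O(z^{3})$, whose coefficient $\beta$ controls the direction. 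Equivalently, differentiating $\Psi$ and $\Theta$ with respect to $K$ at the critical value -- exploiting the compact identity $(K+S)(KT+S)-T=S(T-1)$ valid at $K=K_{TB,1}$ -- determines in which direction the positive branch of $E_{3}$ bifurcates and whether it is stable; after simplification, the resulting rational expression in $S,T$ yields the announced threshold $T=S/(1+S)$ for the $E_{1}$ bifurcation, while the analogous computation at $K_{TB,2}$ produces the always-backward character of the second bifurcation.

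The main obstacle is the algebraic bookkeeping in this direction step. The sparsity of $J$ at each axial equilibrium makes parts (1)--(3) almost immediate, and transversality together with the collision $E_{3}\to E_{i}$ supplies the existence of a genuine transcritical bifurcation in part (4). What actually demands careful work is the extraction of the clean threshold $T=S/(1+S)$ (versus the absence of an analogous $T$-threshold at $E_{2}$) from the quadratic normal-form coefficient; this requires non-trivial simplification of the rational expressions for $\Psi'(K_{TB,1})$, $\Theta'(K_{TB,1})$ and their counterparts at $K_{TB,2}$.
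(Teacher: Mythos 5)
Parts (1)--(3) of your proposal follow exactly the paper's route: evaluate the Jacobian at each axial equilibrium, peel off the decoupled eigenvalue ($1$, $1-S/T-K$, or $1-S-K$) and reduce the rest to $-S$ together with the roots of $\lambda^{2}+S\lambda+S$. This is correct; one small imprecision is your claim that those two roots "have real part $-S/2$ for $S>0$" --- that holds only when they are complex ($S<4$); for $S\geq 4$ they are real and distinct, though still both negative (sum $-S<0$, product $S>0$), so the conclusion stands.

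The gap is in part (4), and it sits precisely where you locate "the main obstacle": the determination of the backward/forward character is not carried out, and the shortcut you offer as an equivalent would give the wrong answer. The paper settles the direction via the center-manifold normal form (Theorem 4.1 of Castillo-Chavez and Song), computing the quadratic coefficient $a$ and the cross-term $b=-1$ explicitly; at $E_1$ the sign of $a$ flips exactly at $T=S/(1+S)$, and at $E_2$ one finds $a<0$ for all $S,T>0$, which is the entire content of the backward/forward claim. Your alternative --- differentiating $\Psi$ and $\Theta$ at $K_{TB,1}$ --- yields the branch slope $\frac{d B_{1,3}^{*}}{dK}\big|_{K_{TB,1}}=S\Psi'(K_{TB,1})=\frac{T}{T-1}$, which changes sign at $T=1$, not at $T=S/(1+S)$. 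The threshold $S/(1+S)$ does not come from the branch slope alone but from the sign of the first component of the critical right eigenvector, $w_{1}=\frac{ST}{S-T-ST}$, which changes sign exactly when $T(1+S)=S$; only the combination of the slope, the eigenvector signs, and the exchange-of-stability principle (equivalently, the sign of $a$ itself) recovers the stated classification. So as written, the "equivalently" step would lead you to the wrong threshold, and the normal-form computation you defer is not bookkeeping but the actual proof of item (4). A secondary point worth making explicit: since $w_1<0$ in part of the parameter range, the "positive/negative equilibrium" language of the cited bifurcation theorem must be reinterpreted relative to the zero branch (the paper does this in Remark~\ref{rem:pos}); without that caveat even a correct computation of $a$ can be misread.
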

 
\begin{proof}
\textit{(1)} The Jacobian matrix evaluated at $E_0$ admits the eigenvalues $\{1,1,-S,-S\}$; therefore, $E_0$ is always unstable.\\
\textit{(2)} The eigenvalues of the Jacobian matrix evaluated at $E_1$ are: $$ \left\{-S,\;1-K-\frac{S}{T},\;-\frac{1}{2} \left(S+\sqrt{(S-4) S}\right),\;-\frac{1}{2} \left(S-\sqrt{(S-4) S}\right)\right\},$$ and it can easily be verified that $E_1$ is asymptotically stable if $K>1-\displaystyle\frac{S}{T}$, whereas it is unstable if $\displaystyle K<1-\frac{S}{T}$.\\
\textit{(3)} The Jacobian matrix at $E_2$ has the following eigenvalues $$\left\{-S,\;1-K-S,\;-\frac{1}{2} \left(S+\sqrt{(S-4) S}\right),\;-\frac{1}{2} \left(S-\sqrt{(S-4) S}\right)\right\}.$$ Therefore $E_2$ is asymptotically stable if $K>1-S$, and is unstable if $K<1-S$.\\
\textit{(4)}
To determine the stability properties of the nonhyperbolic equilibria $E_1$ and $E_2$ for $K=1-\displaystyle\frac{S}{T}$ and $K=1-S$, respectively, and the emergence of another equilibrium (bifurcated from each of them) we resort to Theorem~4.1 of \cite{Chavez_2004} based on the center manifold theory. In particular, we investigate the signs of  the coefficients $a$ and $b$ of the normal form of the system on the center manifold (see Appendix~A).

For $\displaystyle K=1-\frac{S}{T}$, the Jacobian matrix \eqref{jacG} evaluated at $E_1$ admits the eigenvalues
\begin{equation}
    0,\;-S,\;-\frac{1}{2} \left(S \pm \sqrt{(S-4) S}\right),
\end{equation}
which are all negative except for a single zero eigenvalue, thus the center manifold is one-dimensional. The right and left eigenvectors corresponding to the zero eigenvalue, namely $\mathbf{w}$ and $\mathbf{v}$, are
    \begin{equation}
    \begin{aligned}
        \mathbf{w} &= \left( \frac{S T}{S - T - S T},~\frac{S}{T},~\frac{T}{S-T-S T},~1 \right),\\
        \mathbf{v} &= \left(\frac{S-T-S T}{ST},~0,~0,~0 \right).
    \end{aligned}
    \end{equation}
    Correspondingly, the quantities $a$ and $b$ read
    \begin{equation}
    \begin{aligned}
        a &= \frac{S^4(T-1)^2+(S T)^2(T-2)^2+2 T^4+S T^3(2T-3)-S^3 T(3-4T+T^2)}{S T^3(T-S+S T)}, \\
        b &= -1,
    \end{aligned}
    \end{equation}
    where $a<0$ if $\displaystyle T<\frac{S}{1+S}$ (which can only occur for $T<1$), while $a>0$ for $\displaystyle T>\frac{S}{1+S}$.\\
    Then, using \cite[Theorem~4.1]{Chavez_2004} we have
    \begin{itemize}
        \item Let $\displaystyle T < \frac{S}{1+S}$. If $\displaystyle K <  K_{TB,1}$ and $\displaystyle \left|K-K_{TB,1} \right| \ll 1$, then $E_1$ is unstable. Moreover, if $\displaystyle 0 <K-K_{TB,1} \ll 1$, then $E_1$ is locally asymptotically stable, and there exists a negative (unfeasible) unstable equilibrium, corresponding to $E_3$. 
       
        \item Let $\displaystyle T > \frac{S}{1+S}$. If $\displaystyle K < K_{TB,1}$ and $\displaystyle \left|K-K_{TB,1} \right| \ll 1$, $E_1$ is unstable  and there exists a locally asymptotically stable negative (unfeasible) equilibrium. In addition, if $\displaystyle 0 < K-K_{TB,1} \ll 1$, $E_1$ is locally asymptotically stable and the positive unstable equilibrium $E_3$ appears.
    \end{itemize}

If $K=1-S$, the eigenvalues of the Jacobian matrix \eqref{jacG} evaluated at $E_2$ are
\begin{equation}
    0, \;-\frac{1}{2} \left(S \pm \sqrt{(S-4) S}\right),\; -S,
\end{equation}
which are all negative, except for a simple zero eigenvalue. The right and left eigenvectors corresponding to the zero eigenvalue, namely $\mathbf{w}$ and $\mathbf{v}$, write
    \begin{equation}
    \begin{aligned}
        \mathbf{w} &= \left( \frac{S~(S T - T - S)}{T},~\frac{S}{T},~\frac{S T - T - S}{T},~1 \right),\\
        \mathbf{v} &= \left( 0,~\frac{T}{S},~0,~0 \right).
    \end{aligned}
    \end{equation}
    Consequently, the quantities $a$ and $b$ read
    \begin{equation}
    \begin{aligned}
        a &= -\frac{T^2-(S-1-S^2)(S+T-S T)^2}{S T}, \\
        b &= -1,
    \end{aligned}
    \end{equation}
    which are both negative for every positive $S$ and $T$. Then, if $K<K_{TB,2}$ and $|K-K_{TB,2}| \ll 1$, $E_2$ is unstable, whereas if $0 < K-K_{TB,2} \ll 1$, $E_2$ is locally asymptotically stable  and there exists the unstable equilibrium $E_3$ provided that $0<B^{*}_{1,3}<B^{*}_{1,2}$, i.e. if $K>K_{TB,1}$ (see Remark~\ref{rem:pos}). 
\end{proof}

\begin{remark} \label{rem:pos}
We note that Theorem 4.1 in \cite{Chavez_2004} refers to the branch of zero equilibria that exchange the stability properties with an emerged second branch of equilibria via a transcritical bifurcation. Hence, in this context \textquotedblleft negative" and \textquotedblleft positive" should be interpreted as \textquotedblleft less than" or \textquotedblleft greater than" the corresponding component on the zero-branch. In our case, since $B^*_{1,1}=0$, the emergence of a branch of negative equilibria leads to an ecologically unfeasible scenario. On the contrary, being $B^*_{1,2}=S$, the emergence of a new branch of equilibria $E_3$  occurs provided that $B^*_{1,3}<S$. This condition is always satisfied for $\displaystyle K>K_{TB,2}$ when  $E_3$ exists (see Proposition~\ref{prop1} \textit{(ii)}). Finally, we note that the analytical results of Proposition~\ref{prop2} are in agreement with the (numerical) bifurcation diagrams reported in Figure~\ref{fig:bifdiag} (for $S=0.5$, and $T=0.1,1,1.5$).
\end{remark}

\begin{remark}
The result \textit{(4)} in Proposition~\ref{prop2} gives us indirect information on the local stability properties of the coexistence equilibrium $E_3$ in a neighbourhood of each transcritical bifurcation point $K_{TB,1}$ and $K_{TB,2}$, for every $T$. The local stability properties of $E_3$ when $T=1$ will be discussed in more details in the following section, where we will also analytically prove the existence of a Hopf bifurcation.
\end{remark}

We conclude this section noting that the equilibrium value $B_{2,1}^*$, whenever $E_1$ is asymptotically stable, increases with respect to $S$ and decreases with respect to $T$. Analogously, the equilibrium value $B_{1,2}^*$ of $E_2$ always increases with respect to $S$.

\subsection{Linear stability analysis of coexistence equilibria in the case $T=1$} \label{ssec:4.3}

When $T=1$ the transcritical bifurcation values $K_{TB,1}$ and $K_{TB,2}$ introduced in Proposition \ref{prop2} coincide. However, the stability analysis of the coexistence equilibrium $E_3$ shows the occurrence of an additional \textit{simple Hopf bifurcation}\footnote{We refer to a simple Hopf bifurcation when a pair of complex conjugate eigenvalues of the Jacobian matrix passes through the imaginary axis while all other eigenvalues have negative real parts.}.

\begin{proposition} \label{prop3}
Let $S>0$ and $T=1$, then the coexistence equilibrium (\ref{12}) reduces to
    \begin{equation}\label{15}
       \bar{E}_3=\left( \frac{S}{K+S+1},\frac{S}{K+S+1},\frac{1}{K+S+1},\frac{1}{K+S+1}\right). 
    \end{equation}

\begin{enumerate}
    \item[(1)] If $K\geq 1$ the equilibrium $\bar{E}_3$ always exists and is locally unstable.
    \item[(2)] If $K<1$ and $K\neq 1-S$ the equilibrium (\ref{15}) exists if  
    \begin{equation}\label{17}
     -(1+S)<K<1-S \quad \textit{or} \quad 1-S<K<1,   
    \end{equation}  and  
    \begin{enumerate}
        \item [(2a)] $\bar{E}_3$ is locally unstable if \; $-(1+S)<K<-S$ or $1-S<K<1$;
        \item [(2b)] $\bar{E}_3$ is locally asymptotically stable if \;$-S<K<1-S$;
        \item [(2c)] system (\ref{8ODE})  undergoes a \textit{simple Hopf bifurcation} at  $\bar{E}_3$ when $K \equiv \bar{K}_{HB}=-S$;
        \item [(2d)] When $K \equiv \bar{K}_{TB}=1-S$, a transcritical bifurcation occurs, and the non-coexistence equilibria exchange the stability properties with $\bar{E}_3$.
    \end{enumerate}
   \item[(3)] If $K=\bar{K}_{TB}$ the coexistence equilibria (\ref{13}) are always locally unstable. 
\end{enumerate}
\end{proposition}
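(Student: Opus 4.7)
The plan is to reduce Proposition~\ref{prop3} to an eigenvalue analysis of the Jacobian~(\ref{jacG}) at $\bar{E}_3$, exploiting the discrete $B_1\leftrightarrow B_2,\,T_1\leftrightarrow T_2$ symmetry that emerges when $T=1$. The explicit form~(\ref{15}) follows by substituting $T=1$ into~(\ref{12b}), since both $\Psi$ and $\Theta$ collapse to $1/(K+S+1)$. The admissibility conditions~(\ref{17}) then follow by noting that all four coordinates of $\bar{E}_3$ share the sign of $1/(K+S+1)$: positivity is equivalent to $K>-(1+S)$, which combined with $K<1$ and the exclusion $K=1-S$ gives exactly the two intervals stated.

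Because $\bar{E}_3$ satisfies $B_1^*=B_2^*$ and $T_1^*=T_2^*$, the linear change of variables $u_\pm=\delta B_1\pm\delta B_2,\;v_\pm=\delta T_1\pm\delta T_2$ block-diagonalizes~(\ref{jacG}) into two independent $2\times 2$ blocks $L_+$ and $L_-$. A short computation gives
\[
\operatorname{tr} L_+ = -B^*-S, \qquad \det L_+ = B^*(K+S+1) = S,
\]
so $L_+$ is always asymptotically stable, while
\[
\operatorname{tr} L_- = -\frac{S(K+S)}{K+S+1}, \qquad \det L_- = -B^*(K+S-1),
\]
so the stability of $\bar{E}_3$ is entirely controlled by $L_-$. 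Sign analysis of these two quantities yields asymptotic stability iff $-S<K<1-S$ (item~(2b)), an unstable node for $-(1+S)<K<-S$ (both quantities positive), and saddle behaviour for $1-S<K<1$ and for $K\geq 1$ (where $\det L_-<0$); this handles items~(1) and~(2a).

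The bifurcation points~(2c)--(2d) correspond to the two degeneracies of $L_-$. At $K=-S$ one has $\operatorname{tr} L_-=0$ and $\det L_-=B^*>0$, producing a pair $\pm i\sqrt{B^*}$ of purely imaginary eigenvalues while $L_+$ keeps strictly negative real parts: this is a simple Hopf bifurcation, and transversality is immediate from $\tfrac{d}{dK}(\operatorname{tr} L_-)|_{K=-S}=-S\neq 0$. At $K=1-S$ one has $\det L_-=0$ and $\operatorname{tr} L_-=-S/2<0$, so a single zero eigenvalue appears; since Proposition~\ref{prop2}(4) specialized to $T=1$ gives $K_{TB,1}=K_{TB,2}=1-S$, this is precisely the point where $\bar{E}_3$ simultaneously collides with $E_1$ and $E_2$, matching the transcritical structure of~(2d).

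For part~(3) the $B_1\leftrightarrow B_2$ symmetry is broken, so I would analyze the Jacobian at a generic equilibrium of the family~(\ref{13}) directly. The key observation is that the tangent direction $(1,-1,1/S,-1/S)$ to the family lies in the kernel of this Jacobian (by direct substitution), so zero is always an eigenvalue; hence no equilibrium on the continuum is hyperbolic, and none can be asymptotically stable, since perturbations along the continuum drift onto a neighbouring equilibrium instead of decaying. Applying the Routh--Hurwitz criterion to the cubic factor that remains after removing the zero root shows that the other three eigenvalues have strictly negative real parts, so the obstruction to asymptotic stability lies entirely in the neutral direction along~(\ref{13}); the delicate conceptual point, and the main obstacle of the proof, is precisely identifying this non-hyperbolic behaviour as the meaning of "locally unstable" in the statement.
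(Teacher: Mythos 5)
Your proposal is correct, and I verified the key computations: with $u_\pm=\delta B_1\pm\delta B_2$, $v_\pm=\delta T_1\pm\delta T_2$ the Jacobian at $\bar E_3$ does split into $L_+=\bigl(\begin{smallmatrix}-B^* & -(1+K)B^*\\ 1 & -S\end{smallmatrix}\bigr)$ and $L_-=\bigl(\begin{smallmatrix}B^* & -(1-K)B^*\\ 1 & -S\end{smallmatrix}\bigr)$, whose traces and determinants are exactly as you state, and the two pairs of explicit eigenvalues the paper lists in its Eq.~(\ref{18}) are precisely the spectra of $L_+$ and $L_-$. Your route is genuinely different from the paper's: the authors work with the full quartic characteristic polynomial, applying the Routh--Hurwitz conditions (\ref{20})--(\ref{21}) for items (2a)--(2b) and the coefficient-based criterion of Liu (conditions (\ref{23})) to locate the Hopf point, whereas you exploit the $B_1\leftrightarrow B_2$, $T_1\leftrightarrow T_2$ symmetry at $T=1$ to reduce everything to trace/determinant sign analysis of two $2\times2$ blocks. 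Your decomposition buys transparency and extra information (the Hopf modes live in the antisymmetric subspace, explaining the anti-phase oscillations seen numerically; transversality is immediate from $\frac{d}{dK}\operatorname{tr}L_-=-S/(K+S+1)^2$), at the cost of being tied to the symmetric case $T=1$ --- which is all this proposition requires. For item (3) both arguments hinge on the same fact (a zero eigenvalue, $C_4=0$ in the paper, the kernel vector $(1,-1,1/S,-1/S)$ tangent to the family in yours); your observation that the remaining cubic factor is Hurwitz is correct (one checks $C_1C_3-C_3=2(S-1)B_1^*(B_1^*-S)+2S^3+S^2>0$ on $0<B_1^*<S$), and you are right to flag that what is actually established is failure of asymptotic stability due to the neutral drift along the continuum rather than instability in the strict Lyapunov sense --- the paper's own inference ``$C_4=0$, hence Routh--Hurwitz is violated, hence unstable'' glosses over exactly the same point.
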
 
\begin{proof}
 \textit{(1)} \;The eigenvalues of the Jacobian matrix evaluated at $\bar{E}_3$ are:
 \begin{equation} \label{18}
 \begin{aligned}
    &-\frac{S (K+S+2)\pm \sqrt{S \left(S (K+S)^2-4 (K+1) (K+S+1)\right)}}{2 (K+S+1)},\\
    &-\frac{S (K+S)\pm \sqrt{S \left((S+4) (K+S)^2-4\right)}}{2 (K+S+1)}
 \end{aligned}
 \end{equation}
 and are always positive if $K\geq 1$.
 
 \vspace{0.5cm}
 \textit{(2)} \;The conditions (\ref{17}) can easily be deduced from the conditions \textit{(iia)} of Proposition \ref{prop1} by recalling that $T = 1$.  Let us denote the characteristic polynomial of the Jacobian matrix evaluated at $\bar{E}_3$ by
 
 \begin{equation} \label{19}
  \lambda^4 +A_1  \lambda^3 +A_2  \lambda^2+A_3 \lambda +A_4.
 \end{equation}
  According to Routh--Hurwitz criterion, the polynomial (\ref{19}) has all roots with negative real parts if and only if the following relations hold
 \begin{equation} \label{20}
 A_1,\,A_3,\,A_4>0, \quad A_1 A_2-A_3>0, \quad A_1 A_2 A_3-{A_3}^{2}-{A_1}^{2} A_4>0.   
 \end{equation}
 In our case, we have
 \begin{equation} \label{21}
  A_1=2S, \;A_2=S \left[\frac{2 K+S+2}{(K+S+1)^2}+S\right], \; A_3=\frac{2 S^2}{(K+S+1)^2}, \; A_4=\frac{S^2 (1-K-S)}{1+K+S},
 \end{equation}
 and conditions (\ref{20}) reduce to the following 
\begin{equation}
    S>0, \quad (K+S)^2<1, \quad (K+S) (K+S+2) \left[(K+S)^2+S\right]>0.
\end{equation}
Thus, \textit{(2a)} and \textit{(2b)} are proved. 

Similarly, we will prove that system (\ref{10}) undergoes a simple Hopf bifurcation when $K=-S$ using the criterion presented in \cite{Liu_1994} that involves the properties of the coefficients of characteristic equation instead of those of eigenvalues. In detail, if 
\begin{equation} \label{23}
\begin{array}{l}
(H1)\quad A_{3},A_{4}>0,\;\,A_{2}A_{3}-A_{1}A_{4}>0,\;\,A_{1}A_{2}A_{3}-{A_{3}}^{2}-{%
A_{1}}^{2}A_{4}=0,\medskip \\
(H2)\quad \left. \displaystyle \frac{d}{dK}\left( A_{1}A_{2}A_{3}-{A_{3}}^{2}-{A_{1}}%
^{2}A_{4}\right) \right\vert _{K_{0}}>0,%
\end{array}    
\end{equation}
hold, then there is a simple Hopf bifurcation at $K=K_0$.

Owing to Eq.~(\ref{21}), condition $(H1)$ supplies the bifurcation value $\bar{K}_{HB}=-S$, whereas condition $(H2)$ reduces to $8S^{5}>0$. In particular, when $K=\bar{K}_{HB}$ the eigenvalues (\ref{18}) become
\begin{equation}
  -S \pm \sqrt{(S-1) S},\quad \pm i \sqrt{S}. 
\end{equation}
This proves \textit{(2c)}.

As for \textit{(2d)}, this result has already been proved in Proposition~\ref{prop2} for all $T$. In this case, the two bifurcation values $K_{TB,1}$ and $K_{TB,2}$ reduce to $\bar{K}_{TB}$ and the non-coexistence equilibria exchange the stability properties with $\bar{E}_3$.

\textit{(3)} \;The characteristic polynomial of the Jacobian matrix evaluated at the coexistence equilibria (\ref{13}) writes
 
 \begin{equation} \label{25}
  \lambda^4 +C_1  \lambda^3 +C_2  \lambda^2+C_3 \lambda +C_4,
 \end{equation}
 where $0<B^*_1<S$ and 
 \begin{equation} \label{26}
 \begin{aligned}
  &C_1=2-2 K, \quad C_2={{B^*_1}}^2+{B^*_1} (K-1)+K^2-3 K+2, \\
  &C_3=2 {{B^*_1}}^2+2 {{B^*_1}} (K-1)+(K-1)^2, \quad C_4=0.
 \end{aligned}
 \end{equation}
 Then, since $C_4=0$, at least one of the Routh--Hurwitz conditions is violated and the equilibria (\ref{13}) are locally unstable.
 
 \end{proof}

Finally, we note that the change rates of the biomass and toxicity of the equilibrium $\bar{E}_3$ with respect to the parameters $S$ and $K$ write
\begin{equation}\label{27}
\begin{array}{l}
\dfrac{d\bar{B}_{i,3}^{\ast }}{dS}=\dfrac{K+1}{(K+S+1)^{2}},\quad \dfrac{%
d\bar{T}_{i,3}^{\ast }}{dS}=-\dfrac{1}{(K+S+1)^{2}}, \medskip \\
\dfrac{d\bar{B}_{i,3}^{\ast }}{dK}=-\dfrac{S}{(K+S+1)^{2}},\quad \dfrac{%
d\bar{T}_{i,3}^{\ast }}{dK}=-\dfrac{1}{(K+S+1)^{2}},\quad \quad i=1,2.%
\end{array}
\end{equation}
Then, whenever $\bar{E}_3$ exists and is not unstable,  i.e. at least one of the conditions \textit{(2b)} and  \textit{(2c)} of Proposition \ref{prop3} is verified, we obtain the following results

\begin{equation}\label{28}
\begin{array}{l}
\dfrac{d\bar{B}_{i,3}^{\ast }}{dS}>0\quad \Longleftrightarrow \quad \left(\bar{K}_{HB}\leq
K<\bar{K}_{TB},\;-1<K\leq 0\right) \quad \text{or\quad }0\leq K<\bar{K}_{TB}
,\medskip  \\
\dfrac{d\bar{T}_{i,3}^{\ast }}{dS}<0\quad \Longleftrightarrow \quad \left( \bar{K}_{HB}\leq
K<\bar{K}_{TB},\;K<0\right) \quad \text{or\quad }0\leq K<\bar{K}_{TB} ,%
\end{array}
\end{equation}
and 
\begin{equation}\label{29}
\dfrac{d\bar{B}_{i,3}^{\ast }}{dK}, \; \dfrac{d\bar{T}_{i,3}^{\ast }}{dK} <0\quad
\Longleftrightarrow \quad \left(\bar{K}_{HB}\leq K<\bar{K}_{TB},\;K<0\right) \quad \text{or\quad }%
0\leq K<\bar{K}_{TB}.
\end{equation}
The above results refer to environmental and/or species-specific conditions. In particular, owing to (\ref{9}), the parameter $S$ includes both environmental ($k_1/c_1$) and species-specific conditions that affect the process of growing of biomass positively and the toxicities negatively. On the other hand, the parameter $K$ is only due to species-specific conditions ($s_{ij}/s_{jj}$), consequently its effect is negative for the process of growing of both biomass and toxicities.

\subsection{Linear stability analysis of coexistence equilibria in the case $T\neq1$: numerical analysis and bifurcation diagrams} \label{ssec:4.4}

In Proposition~\ref{prop2} we have been able to prove, for all positive $T$, the occurrence of two transcritical bifurcations at $K_{TB,1}$ and $K_{TB,2}$, where the non-coexistence equilibria exchange the stability properties with the coexistence equilibrium $E_3$ in a neighbourhood of these bifurcation values.  The local character of these stability properties of $E_3$ is confirmed by Proposition \ref{prop3}, where the analysis highlighted the occurrence of an Hopf bifurcation for $T=1$ at $\bar{K}_{HB}<\bar{K}_{TB}$. Then, in this case, the exchange of the stability properties occurs only for  $K \in (\bar{K}_{HB},\bar{K}_{TB})$ and $K>\bar{K}_{TB}$. 

When carrying out the linear stability analysis of the coexistence equilibrium $E_3$ for $T\neq1$, we immediately realize that it is not possible to further extend the analytical results obtained in Proposition~\ref{prop2}. Then, we perform a numerical stability analysis, which also  allows us to locate the Hopf bifurcation values ${K}_{HB}$, for each value of $T$ and $S$.   

In order to detect the stability properties of $E_3$, we identify three relevant regions in $(K,S,T)-$space where $K\in [-4.5,1.5], \; S,T\in[0,2]$ (see Fig.~\ref{fig:stabreg}). In detail, Fig.~\ref{fig:stabreg}(a) shows where $E_3$ is asymptotically stable by solving numerically the Routh--Hurwitz conditions \eqref{20}. Moreover, by using conditions (\ref{23}) and \cite[Theorem 4.1]{Chavez_2004}, we obtain the surfaces where transcritical and Hopf bifurcations, respectively, occur (see Figs.~\ref{fig:stabreg}(b)-(c)).

Panels~(d)--(f) of Fig.~\ref{fig:stabreg} represent 2D sections of $(K,S,T)-$space for $T=0.1,1,1.5$, which combine the information provided in Fig.~\ref{fig:stabreg}~(a)--(c). In particular, in  Fig.~\ref{fig:stabreg}~(d)--(f) the asymptotically stable regions are depicted in gray, whereas the Hopf and  transcritical bifurcation loci are displayed in purple ($K_{HB}$ values) and yellow ($K_{TB,1}$) and brown ($K_{TB,2}$), respectively. We remark that the gray regions are always bounded by the Hopf and one of the transcritical bifurcation curves, except for the case $T=1$, where the two transcritical bifurcation curves coincide. In agreement with ecological expectations, the asymptotically stability regions increase with $T$.
   \begin{figure}[H]
      \centering
      \begin{minipage}{.33\textwidth}
	\centering
	\includegraphics[scale=0.38]{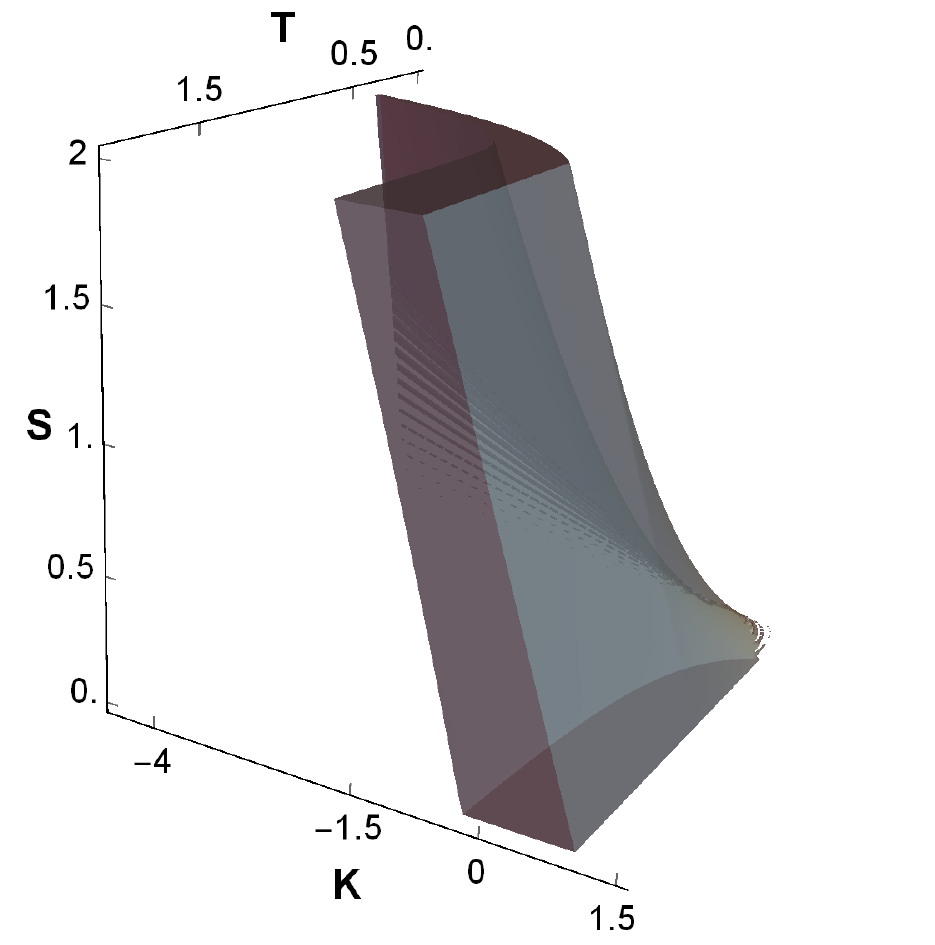} \\
	(a)
      \end{minipage}  
      \hspace{-.6cm}
      \begin{minipage}{.33\textwidth}
	\centering
	\includegraphics[scale=0.38]{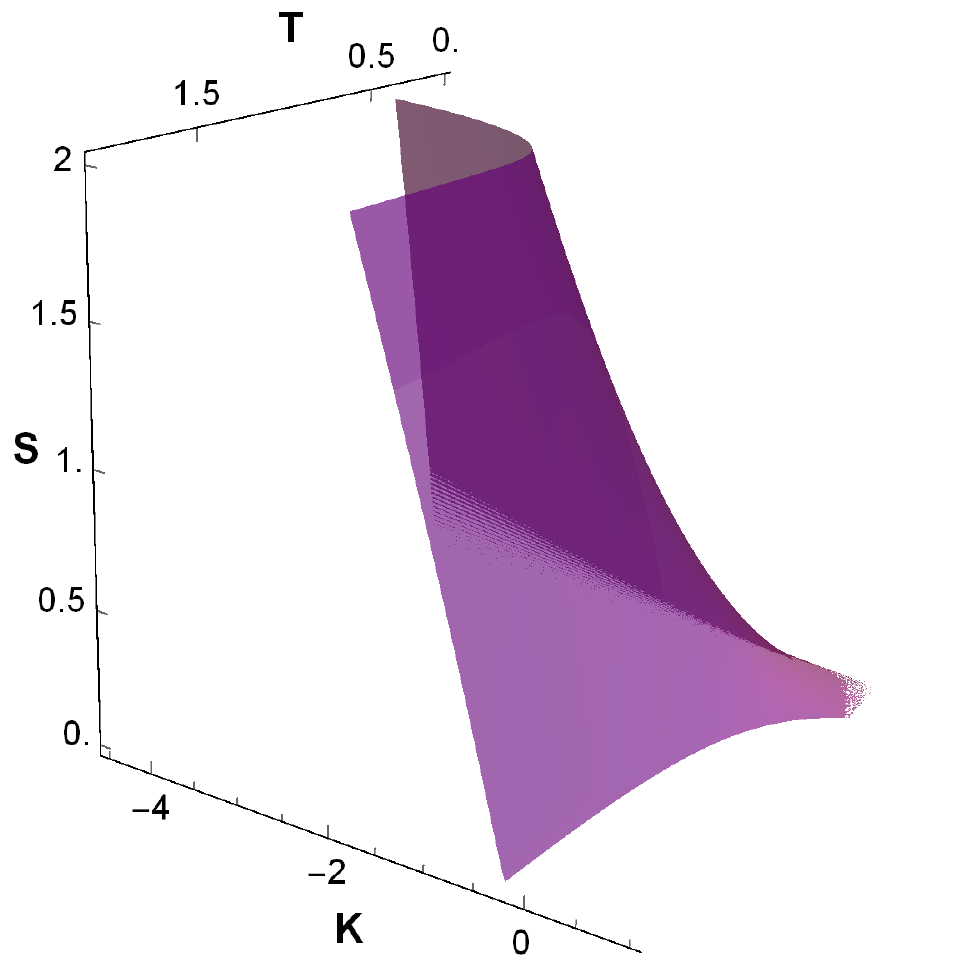}\\
	(b)
      \end{minipage}  
      \hspace{-.6cm}
      \begin{minipage}{.33\textwidth}
	\centering
	\includegraphics[scale=0.38]{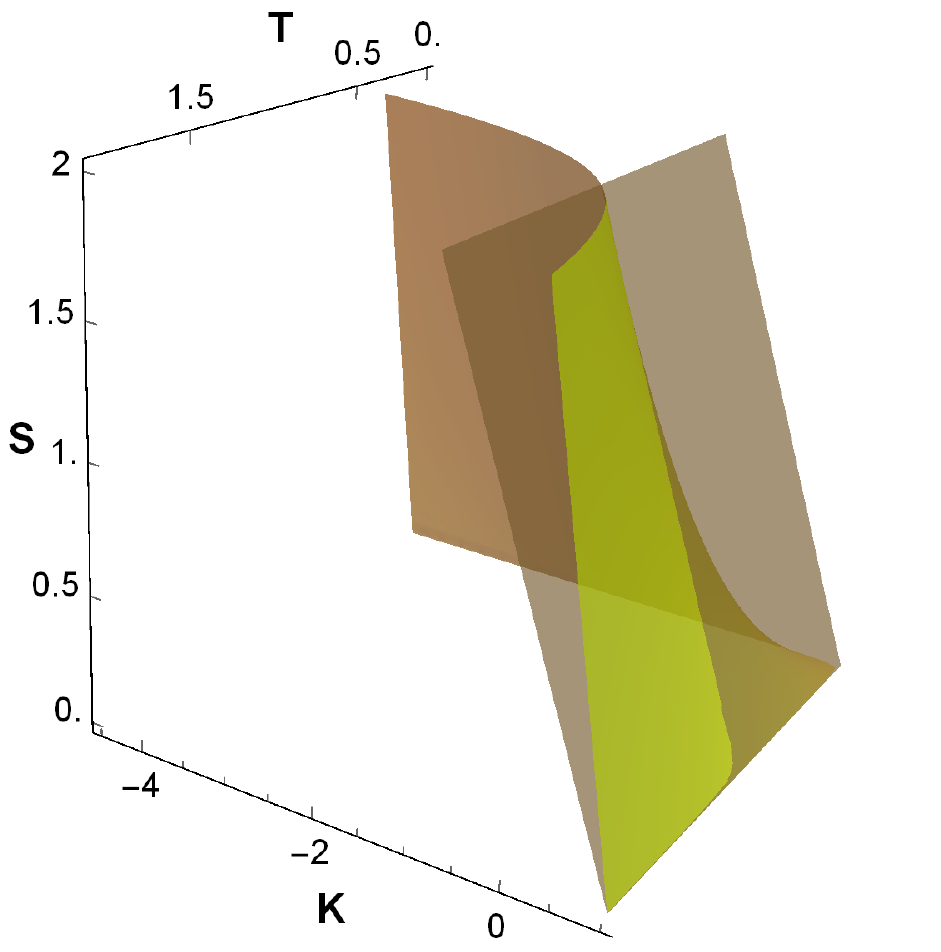} \\
	(c)
      \end{minipage}    
      \\
      \vspace{.5cm}
      \hspace{-.5cm}
      \begin{minipage}{.33\textwidth}
	\centering
	\includegraphics[scale=0.38]{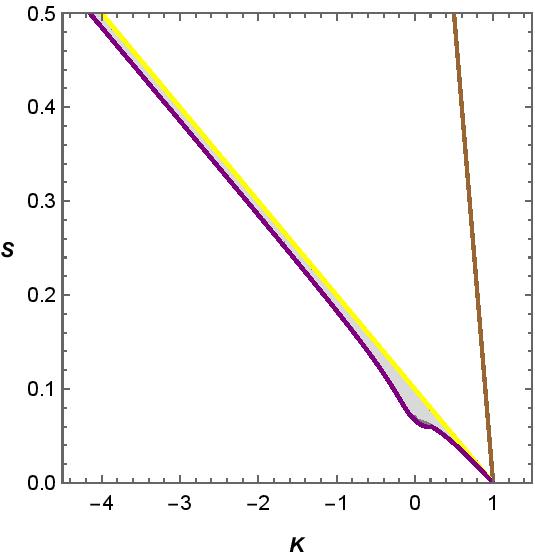}\\
	(d)
      \end{minipage}
      \begin{minipage}{.33\textwidth}
	\centering
	\includegraphics[scale=0.38]{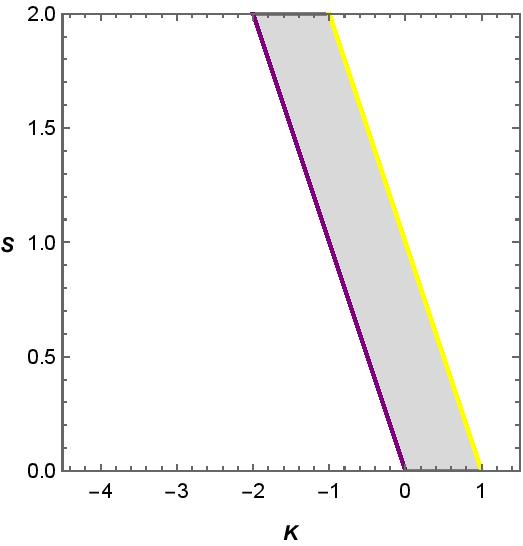}\\
    (e)
      \end{minipage}  
    \begin{minipage}{.33\textwidth}
	\centering
	\includegraphics[scale=0.38]{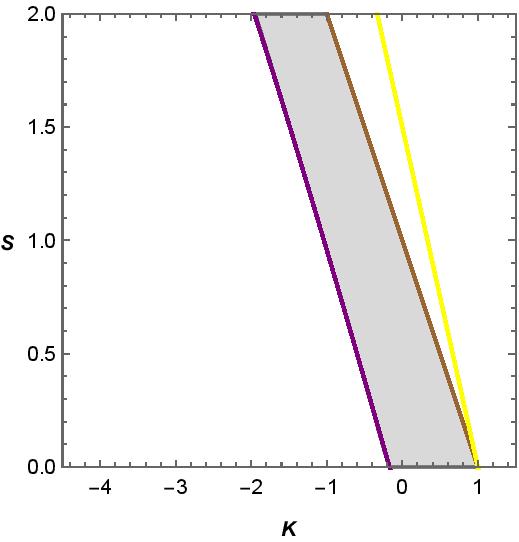} \\
	(f)
      \end{minipage}  
      \caption{(a) Asymptotically stable region for the coexistence equilibrium $E_3$ in $(K,S,T)$-space. (b) Region corresponding to the points in $(K,S,T)$-space where Hopf bifurcations take place. (c) Surfaces of transcritical bifurcation values $K_{TB,1}$ (yellow) and $K_{TB,2}$ (brown) in $(K,S,T)$-space.  The two surfaces intersect for $T=1$. (d)--(f) 2D sections of $(K,S,T)-$space for $T=0.1,1,1.5$. Gray areas represent asymptotically stable regions, whereas purple and yellow/brown curves correspond to Hopf and transcritical bifurcation values, respectively.} 
     \label{fig:stabreg}      
      \end{figure}

Finally, fixing $S=0.5$, we (numerically) construct the bifurcation diagrams into 2D regions  $(K,B_{1}^*)$ and $(K,B_{2}^*)$ for $T=0.1,1,1.5$ (see Fig.~\ref{fig:bifdiag}). In particular, panels in Fig.~\ref{fig:bifdiag} highlight the stability properties of the nontrivial equilibria $E_1, E_2$, and $E_3$ together with the Hopf (triangle) and transcritical (circle)  bifurcation points. Our computations show that
\begin{itemize}
    \item[--] for $T=0.1$: \; \; $K_{HB}\approx-4.15,\; K_{TB,1} = -4,\; K_{TB,2}= 0.5$;
    \item[--] for $T=1$: \; \; $K_{HB} = -0.5,\, K_{TB,1}=K_{TB,2}= 0.5$;
    \item[--] for $T=1.5$: \; \; $K_{HB} \approx -0.6,\, K_{TB,1}=2/3,\, K_{TB,2}=0.5$.
\end{itemize}
      \begin{figure}[H]
      \begin{center}
          $T=0.1$
      \end{center}
      \begin{minipage}{.5\textwidth}
	\centering
	\includegraphics[scale=0.38]{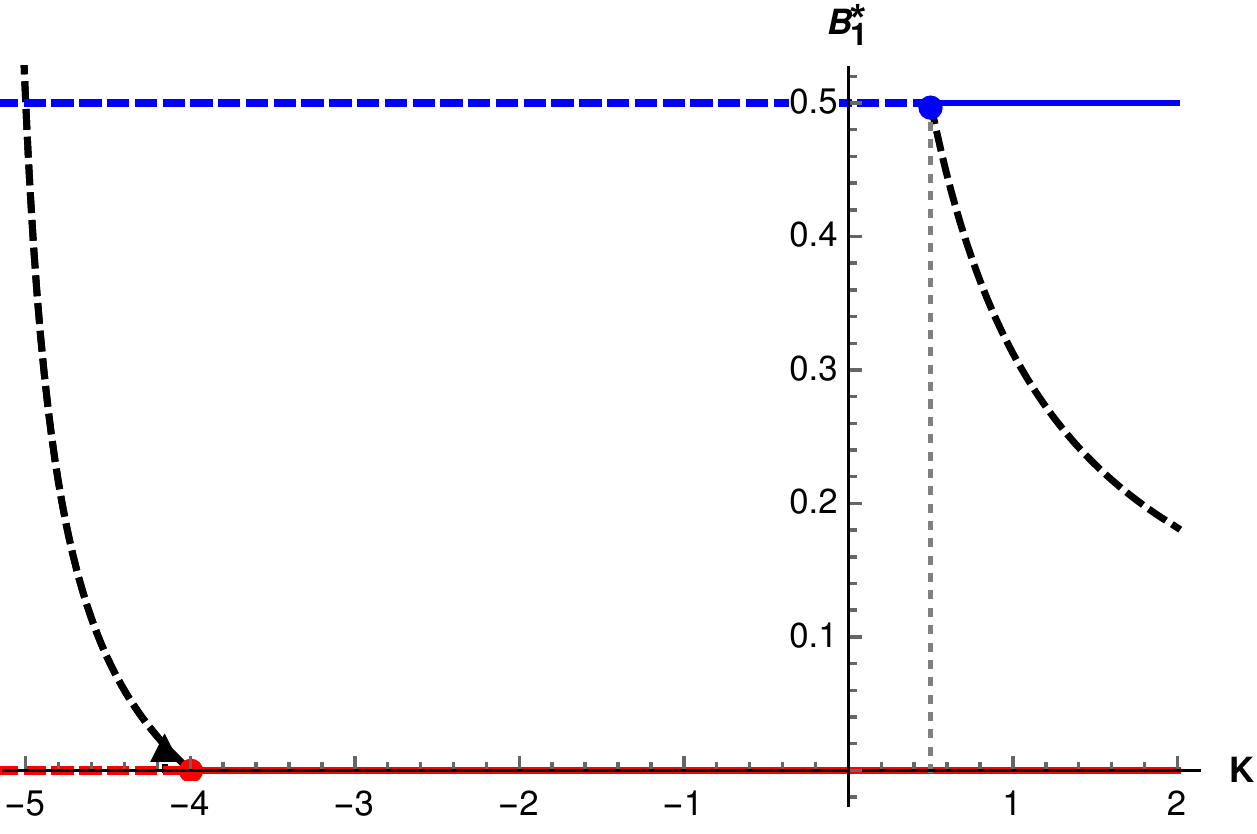}
      \end{minipage}
      \begin{minipage}{.5\textwidth}
	\centering
	\includegraphics[scale=0.38]{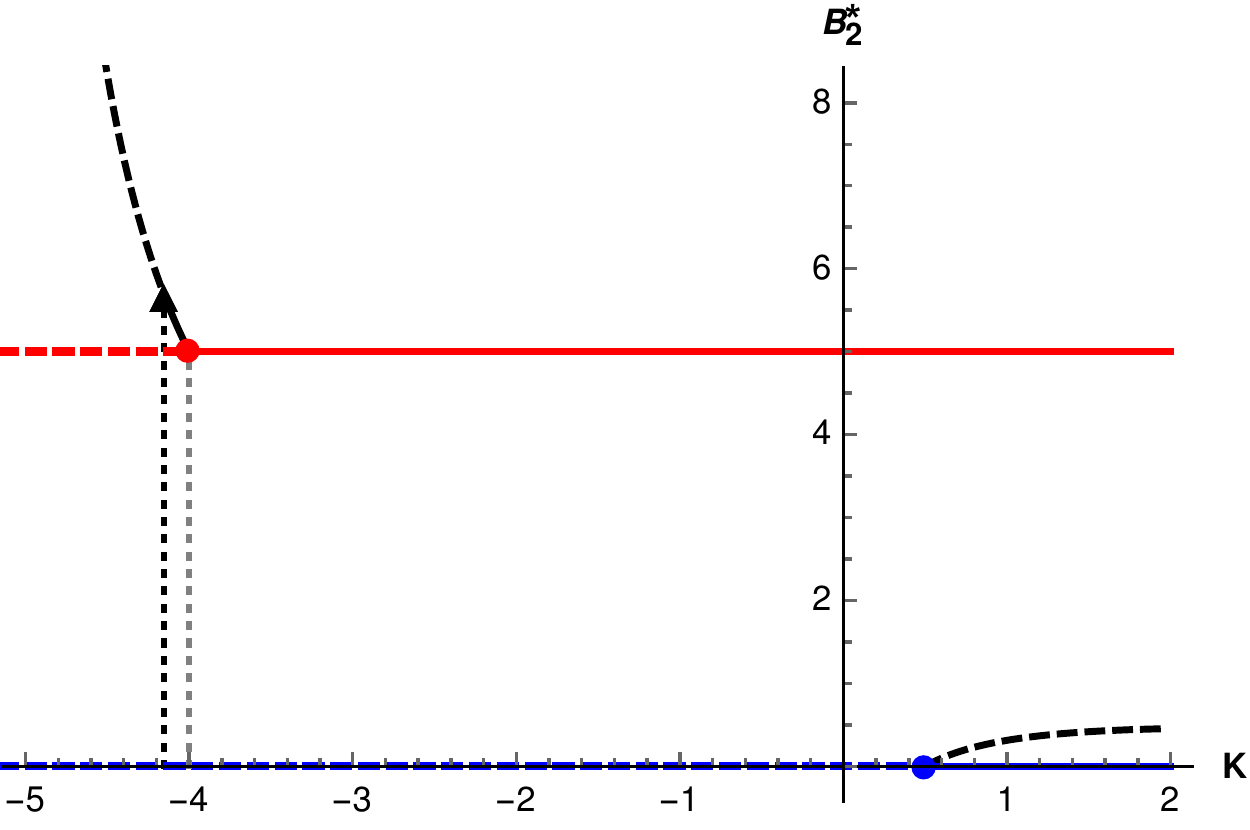}
      \end{minipage}\\
      \begin{center}
          $T=1$
      \end{center}
      \begin{minipage}{0.5\textwidth}
	\centering
	\includegraphics[scale=0.38]{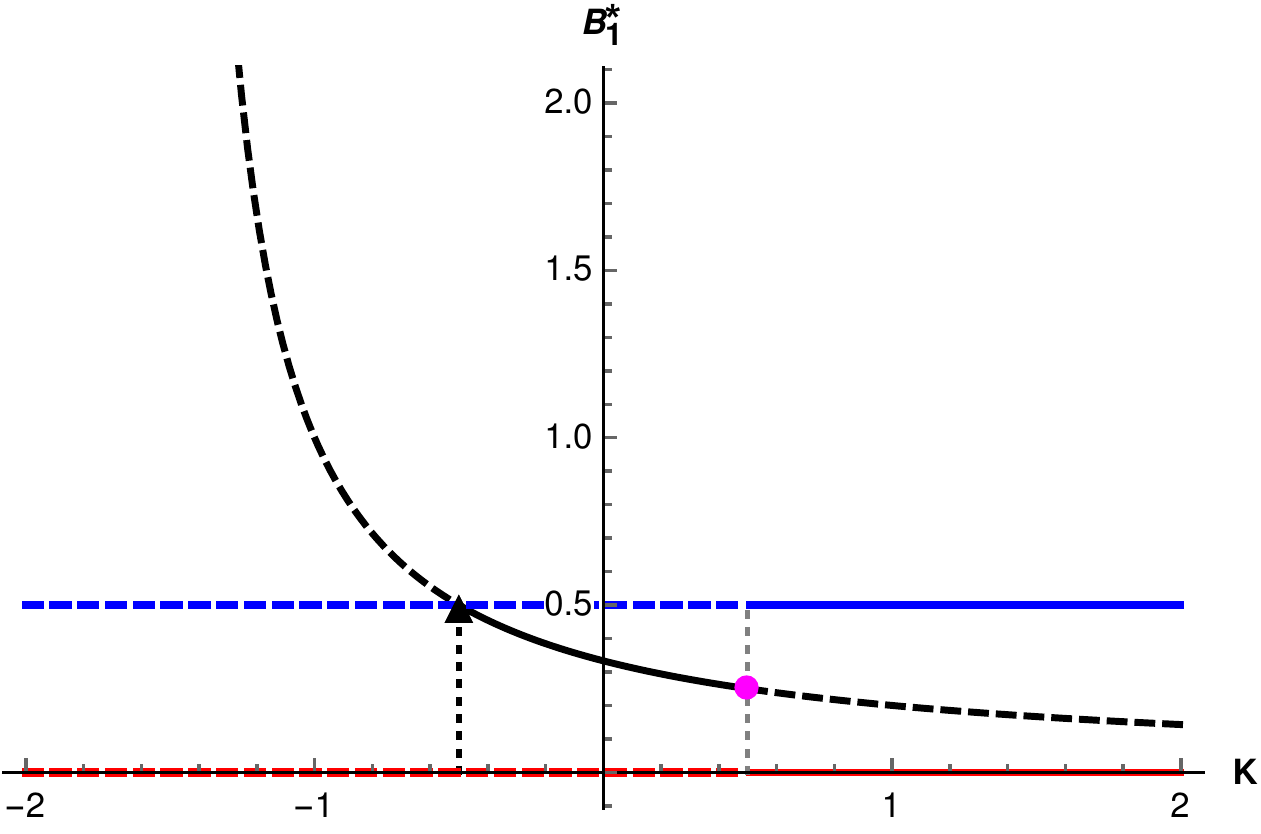}
      \end{minipage}
      \begin{minipage}{0.5\textwidth}
	\centering
	\includegraphics[scale=0.38]{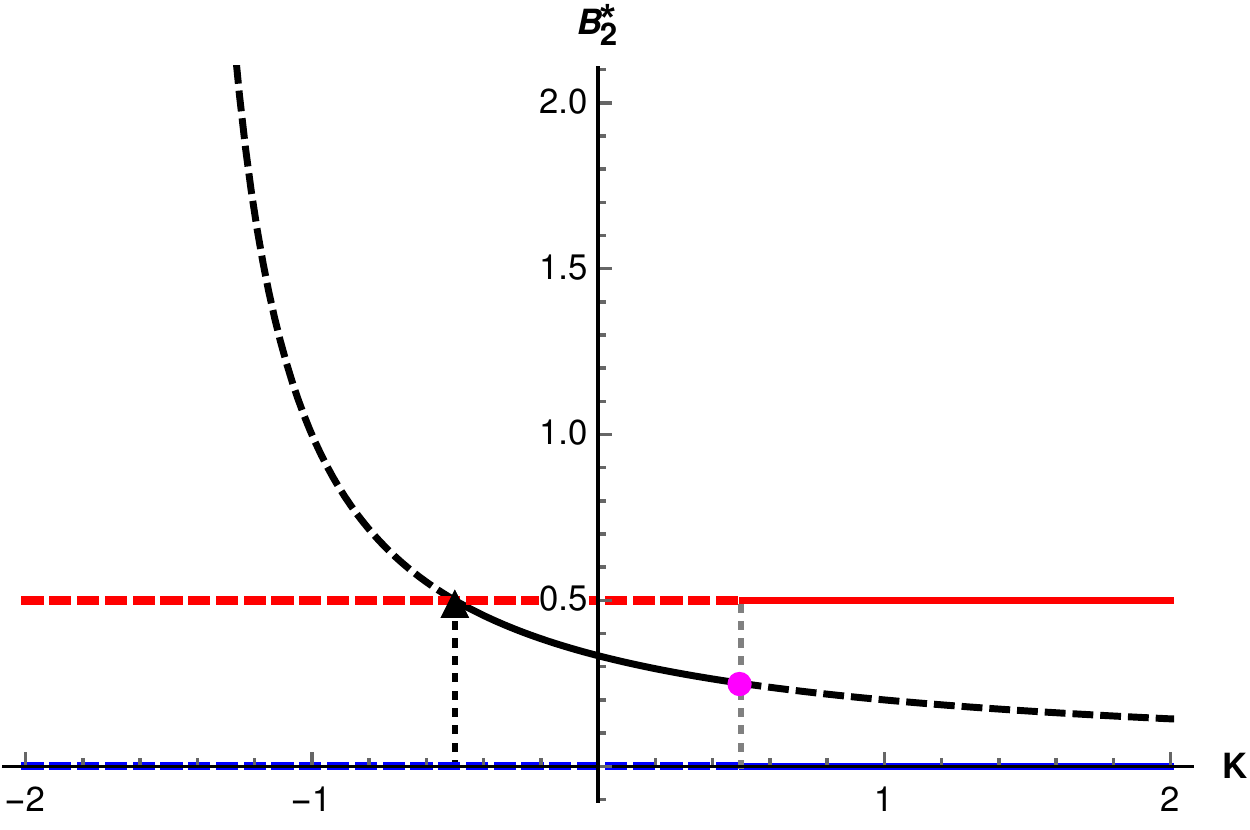}
      \end{minipage}\\
      	\begin{center}
          $T=1.5$
      \end{center}
    \begin{minipage}{0.5\textwidth}
	\centering
	\includegraphics[scale=0.38]{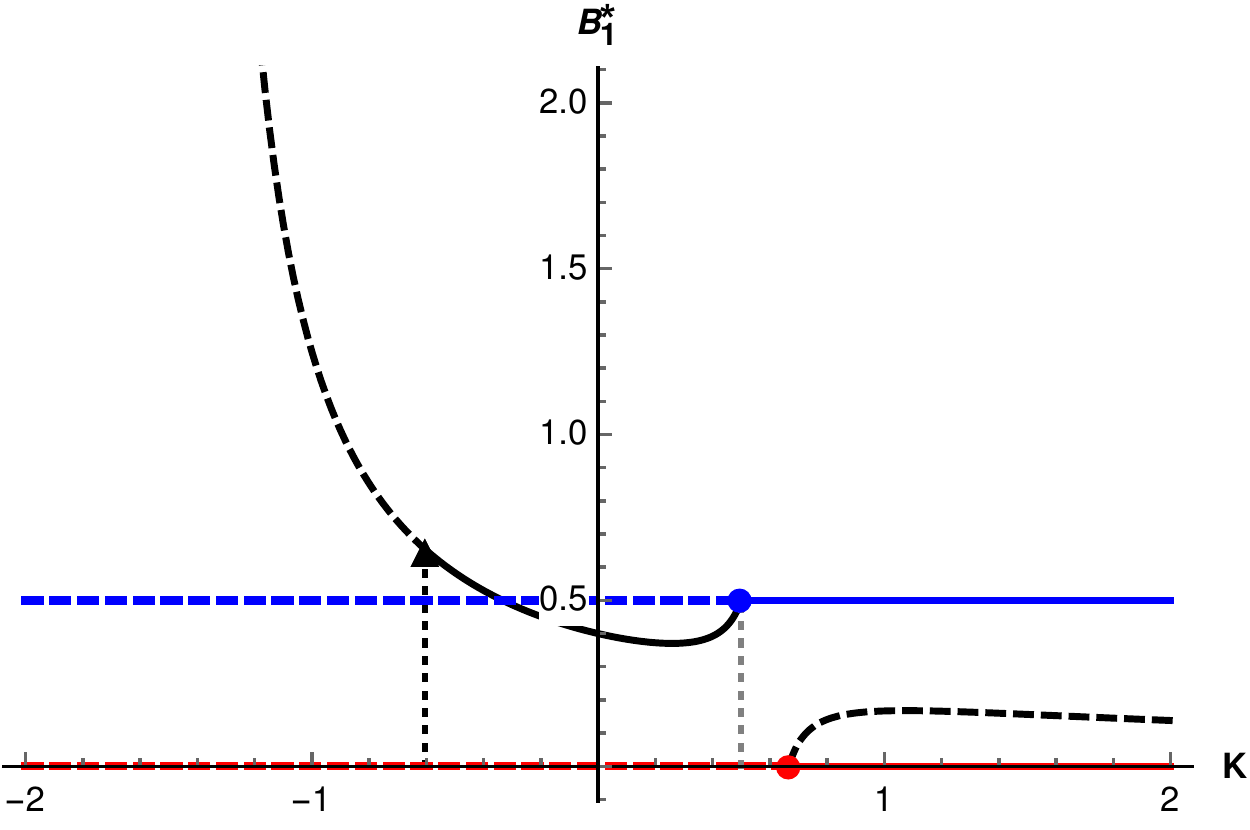}
      \end{minipage}
      \begin{minipage}{.5\textwidth}
	\centering
	\includegraphics[scale=0.38]{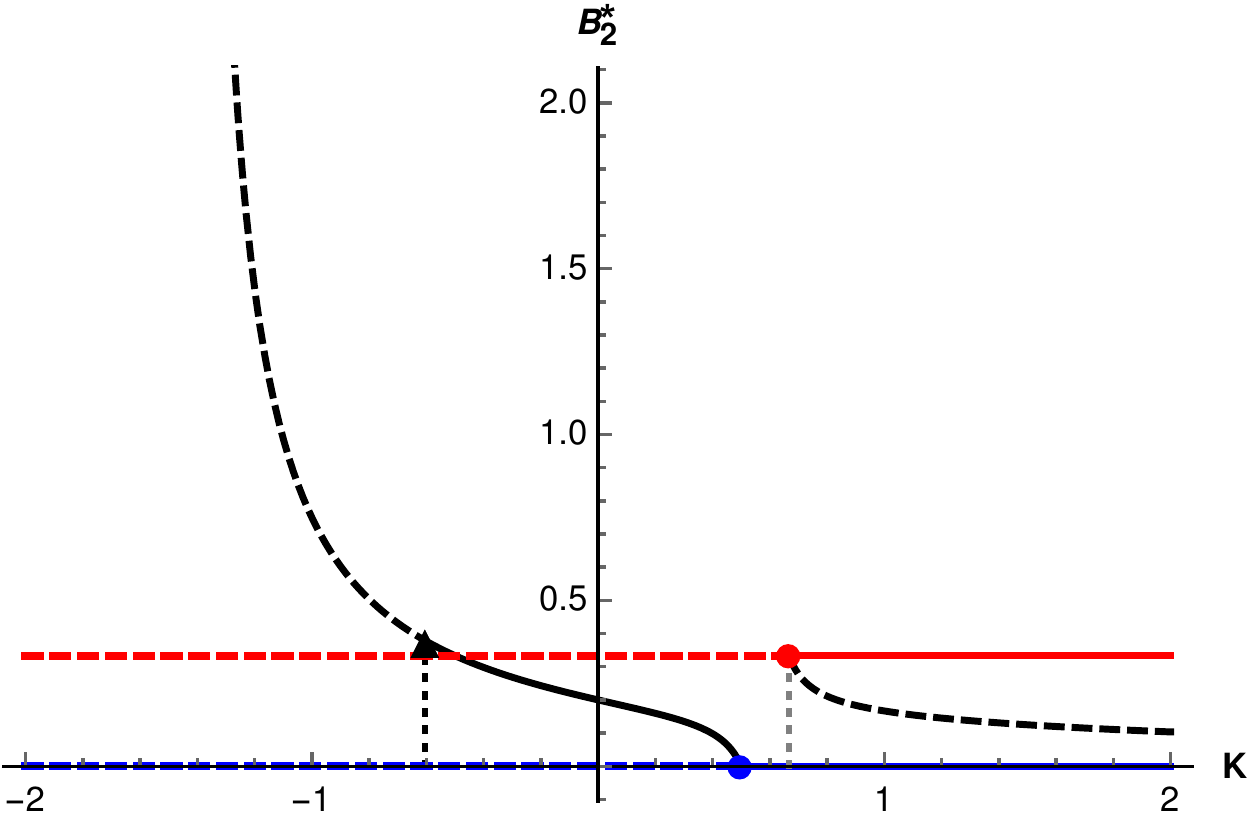}
      \end{minipage}
      \caption{Bifurcation diagrams for the components $B_{1}^*$ and $B_{2}^*$ of the equilibria $E_1$ (red), $E_2$ (blue), and $E_3$ (black) for $S=0.5$. Solid and dashed lines indicate asymptotic stability and instability, respectively. Triangles and circles represent the Hopf and transcritical bifurcation points, respectively. We have that: (i) for $T=0.1$: $K_{HB}\approx-4.15,\, K_{TB,1} = -4,\, K_{TB,2}= 0.5$; (ii) for $T=1$: $K_{HB} = -0.5,\, K_{TB,1}=K_{TB,2}= 0.5$; (iii) for $T=1.5$:  $K_{HB} \approx -0.6,\, K_{TB,1}=2/3,\, K_{TB,2}=0.5$.}
   \label{fig:bifdiag}      
      \end{figure}


\section{Linear stability analysis under spatially heterogeneous perturbation}\label{sec:5}

In terms of system (\ref{8}) the stability properties discussed in Section~\ref{sec:4} refer to (small) spatially homogeneous perturbations of the nontrivial equilibria $E_j,\;j=1,..,3$.\footnote{We exclude from this analysis the equilibrium $E_0$ since it is always locally unstable.} In particular, system (\ref{8}) possesses any periodic solution of system (\ref{8ODE}) as spatially homogeneous periodic solution, including the ones from Hopf bifurcation detected in Proposition~\ref{prop3} and Section~\ref{ssec:4.4}. However, the stability properties of these periodic solutions with respect to (\ref{8}) could be different from that for system (\ref{8ODE}) because of diffusion. 

In this section, we analyse the effect of diffusion on the stability properties of the  equilibria $E_j$, $j=1,\dots,3$ by investigating the linearized jacobian operator relative to system \eqref{8}
\begin{equation}\label{eq:JacPDEh}
L(K)= \left(
\begin{array}{cccc}
 -B_2-T_1-K T_2+1 +\Delta& -B_1 & -B_1 & -B_1 K \\
 -B_2 & -B_1-K T_1-T_2+1+D_B \Delta & -B_2 K & -B_2 \\
 1 & 0 & -S & 0 \\
 0 & 1 & 0 & -S \\
\end{array}
\right),   
\end{equation}
It is known that the eigenvalues of $L(K)$ are given by the eigenvalues of the following matrices
\begin{equation}\label{eq:JacPDE}
L_h(K)= \left(
\begin{array}{cccc}
 -B_2-T_1-K T_2+1 -h& -B_1 & -B_1 & -B_1 K \\
 -B_2 & -B_1-K T_1-T_2+1-D_B h & -B_2 K & -B_2 \\
 1 & 0 & -S & 0 \\
 0 & 1 & 0 & -S \\
\end{array}
\right),   
\end{equation}
where $h=\left(\frac{j \pi}{l_1}\right)^2+\left(\frac{k \pi}{l_2}\right)^2$, $j,k=0,1,2,\dots$, are the eigenvalues of the opposite of the Laplacian operator.\footnote{The eigenvalue problem associated to the Laplacian operator is
\begin{equation}
\begin{aligned}
    \Delta u &= \lambda u \quad \text{ in } \Omega, \\
    \partial_\nu u &= 0 \quad \text{ on } \partial \Omega,
\end{aligned}
\end{equation}
where $\Omega$ is the rectangular domain $\left[ 0, l_1 \right] \times \left[0, l_2 \right]$.
}

In particular, the stability properties of the stationary homogeneous solutions $E_j$ in the presence of heterogeneous perturbations, can be deduced from the signs of the real part of the eigenvalues of $L(K)$ or, equivalently, of the eigenvalues of the $L_h(K)$ for all $h$.

A Hopf bifurcation occurs if $L(K)$ possesses a pair of complex conjugate eigenvalues $\alpha(K) \pm \imath \, \omega(K)$ with
\begin{equation} \label{eq:HC}
  \textit{(HC)} \quad \quad \omega(K_\text{HB})>0, \quad \alpha(K_\text{HB})=0, \quad \alpha'(K)|_{K=K_\text{HB}} \neq 0,
\end{equation}
(see \cite[Theorem II]{Hassard_82}).  Owing to (\ref{eq:JacPDE}), a Hopf bifurcation occurs if there exists a unique $h$ such that conditions~\eqref{eq:HC} are satisfied for $L_h(K)$.  \\
In order to study the occurrence of transcritical bifurcations for system~\eqref{8}, we need to verify that the following three conditions hold (see \cite[Sects. I.6-I.7]{Kielhoefer_2004})
\begin{equation}\label{TC}
\begin{aligned}
\textit{(TC1)} \ \ &0 \text{ is a simple eigenvalue of the matrix } L \text{ evaluated at } (E_j,K_\text{TB}), \\ 
\textit{(TC2)} \ \ &\text{a real eigenvalue } \mu(K) \text{ of } L \text{ crosses the imaginary axis at } (E_j,K_\text{TB}) \text{ with } \\ &\text{ nonvanishing speed}, \\
\textit{(TC3)} \ \ &\text{the slope of the non-trivial bifurcating branch at } K_{TB} \text{ is nonvanishing. }
\end{aligned}
\end{equation}

Also in this case, a transcritical bifurcation occurs if there exists a unique $h$ such that  conditions (\ref{TC}) are verified for $L_h(K)$. 

\subsection{Linear stability analysis under spatially heterogeneous perturbation of non-coexistence equilibrium configurations}

\begin{proposition}\label{prop5}
Let $S$, $T$, and $D_B$ be positive and let $K$ be a real parameter. Then, we have that:
\begin{itemize}
    \item[(i)] $E_1$ is asymptotically stable for system \eqref{8} if and only if $\displaystyle K > 1-\frac{S}{T}$.
    \item[(ii)] $E_2$ is asymptotically stable for system \eqref{8} if and only if $\displaystyle K > 1-S$.
\end{itemize}
\end{proposition}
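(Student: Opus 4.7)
The plan is to compute the eigenvalues of the mode-$h$ linearisation $L_h(K)$ at $E_1$ and $E_2$ explicitly, and show that the diffusive shifts $-h$ and $-D_B h$ never destabilise either equilibrium beyond what is already captured by the ODE analysis of Proposition~\ref{prop2}. Since the ODE stability thresholds $K=1-S/T$ and $K=1-S$ were obtained for $h=0$, the task reduces to checking that the relevant inequalities remain saturated at $h=0$ when we require stability for every spatial mode $h\geq 0$.

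At $E_1=(0,S/T,0,1)$, substitution into \eqref{eq:JacPDE} shows that the first row of $L_h(K)$ vanishes except at position $(1,1)$, because $B_1=0$. The characteristic polynomial therefore factors as the product of $\bigl(1-K-S/T-h-\lambda\bigr)$ with the characteristic polynomial of the $3\times 3$ principal minor associated with $(B_2,T_1,T_2)$. That minor itself contains a zero row coming from $\partial_t T_1=B_1-ST_1$ evaluated at $B_1=0$, so it contributes the eigenvalue $-S$ together with a quadratic factor of the form
\begin{equation*}
\lambda^2 + (D_B h + S)\lambda + S(D_B h + 1) = 0,
\end{equation*}
whose coefficients are strictly positive for $h\geq 0$ and $S,D_B>0$. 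By the Routh--Hurwitz criterion, both roots lie in the open left half-plane for every $h$, so the stability is governed solely by the scalar factor. The inequality $1-K-S/T-h<0$ for all $h\geq 0$ is equivalent to its most restrictive instance at $h=0$, giving $K>1-S/T$.

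The argument at $E_2=(S,0,1,0)$ is structurally identical: the second row of $L_h(K)$ now vanishes except at position $(2,2)$, yielding the scalar eigenvalue $1-K-S-D_B h$, while the complementary $3\times 3$ block decouples into $-S$ and a quadratic
\begin{equation*}
\lambda^2 + (h+S)\lambda + S(h+1) = 0
\end{equation*}
whose coefficients remain positive for every $h\geq 0$. The sharp threshold is again set by the scalar factor at $h=0$, producing $K>1-S$.

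The only potential difficulty would be a Turing-type destabilisation, and the key observation is that the partial zero structure of $L_h(K)$ at $E_1$ and $E_2$, a consequence of the absence of one of the two biomasses, prevents diffusion from coupling the potentially unstable direction with the decaying modes. Consequently no such instability can arise at these non-coexistence states, and the PDE stability thresholds coincide exactly with the homogeneous ones of Proposition~\ref{prop2}.
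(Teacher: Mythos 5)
Your proof is correct and takes essentially the same route as the paper: both compute the spectrum of $L_h(K)$ at $E_1$ and $E_2$ explicitly, use the block structure induced by the vanishing biomass to isolate the scalar factor $1-K-S/T-h$ (resp.\ $1-K-S-D_Bh$) together with $-S$ and a quadratic whose coefficients are positive for all $h\geq 0$, and conclude that the binding constraint is the $h=0$ mode. (One small imprecision: the $T_1$ row of the $3\times 3$ minor at $E_1$ is not a zero row but $(0,-S,0)$; the decoupling you then exploit is nevertheless exactly right, and your quadratics match the eigenvalues \eqref{eq:ev1pde}--\eqref{eq:ev2pde} stated in the paper.)
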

\begin{proof}
\textit{(i)} The eigenvalues associated to the Jacobian matrix $L_h(K)$ evaluated at $E_1$ are:
\begin{equation} \label{eq:ev1pde}
     -S, \; 1-h-K-\frac{S}{T},\; -\frac{1}{2} \left( D_B h + S \pm \sqrt{-4S+(S-D_B h)^2} \right),
\end{equation}
which are negative for all $h \geq 0$ if and only if $\displaystyle K>1-\frac{S}{T}$.

\textit{(ii)} The eigenvalues associated to the Jacobian matrix $L_h(K)$ evaluated at $E_1$ are:
\begin{equation} \label{eq:ev2pde}
    -S,\; 1-D_B h-K-S,\; -\frac{1}{2} \left( h + S \pm \sqrt{(h-S)^2-4S} \right),
\end{equation}
which are negative for all $h \geq 0$ if and only if $\displaystyle K>1-S$.
\end{proof}
\begin{remark}
The results in Proposition \ref{prop2} are completely confirmed for the corresponding PDEs system \eqref{8}. First, Proposition \ref{prop5} highlights that the spatial contributions do not alter the stability properties of the non-coexistence equilibria $E_1$ and $E_2$. Then, the transcritical bifurcations obtained for system \eqref{8ODE} persist for system \eqref{8}, showing that the inclusion of spatial disturbances does not lead to the occurrence of any additional transcritical bifurcation points (see Figure \ref{fig:bifdiagPDE}). In particular, conditions \textit{(TC1)}--\textit{(TC2)} analytically supply $h=0$ for $K=K_{TB,1},\, K_{TB,2}$, whereas  the software pde2path \cite{pde2path} allows us to numerically verify all conditions, including \textit{(TC3)}. Finally, we are able to analytically confirm the absence of Hopf bifurcations on the branches corresponding to equilibria $E_1$ and $E_2$ - conditions \textit{(HC)} are in fact not satisfied in this case.
\end{remark}

\subsection{Linear stability analysis under spatially heterogeneous perturbation of coexistence equilibrium: numerical analysis and bifurcation
diagram}

In this section, we numerically investigate whether spatial disturbances alter the stability properties of $E_3$ as well as the presence of bifurcation points obtained for the corresponding ODE system (\ref{8ODE}) in Section \ref{sec:4}. Numerical results, provided by the software pde2path \cite{pde2path}, show that both stability properties and the Hopf and transcritical bifurcations  
persist for spatially heterogeneous perturbations for every chosen value of $D_B$. In more detail, we obtain the same bifurcation diagrams in the 2D region\footnote{We indicate with $\Vert \cdot \Vert_2$ the $L^2-$norm on an interval of length $1/2$.} $(K,\Vert B_1 \Vert_2)$ fixing $D_B=0.1,1,2$. In view of the above consideration, in Fig.~\ref{fig:bifdiagPDE} we show a unique bifurcation diagram (representing all considered $D_B$ values) for each $T=0.1,1,1.5$. Differently from the bifurcation diagrams for the ODE system \eqref{8ODE} obtained in Section \ref{ssec:4.4}, we observe here the presence of bifurcation branches corresponding to unstable, non-uniform spatial stationary solutions of \eqref{8} (see green curves in Fig.~\ref{fig:bifdiagPDE}). When $T=1$, this branch coincides with part of the unstable branch of $E_2$ (blue curve).

For $K<K_{HB}$, all nontrivial equilibria are unstable. Therefore, for spatially heterogeneous initial data, we expect no Turing spatio-temporal patterns to occur. This scenario will be confirmed by numerical simulations (see Section~\ref{sec:6}).
      \begin{figure}[H]
      \begin{minipage}{.3\textwidth}
	\centering
	        \hspace{.3cm} \small{$T=0.1$}\\
	\includegraphics[scale=0.12]{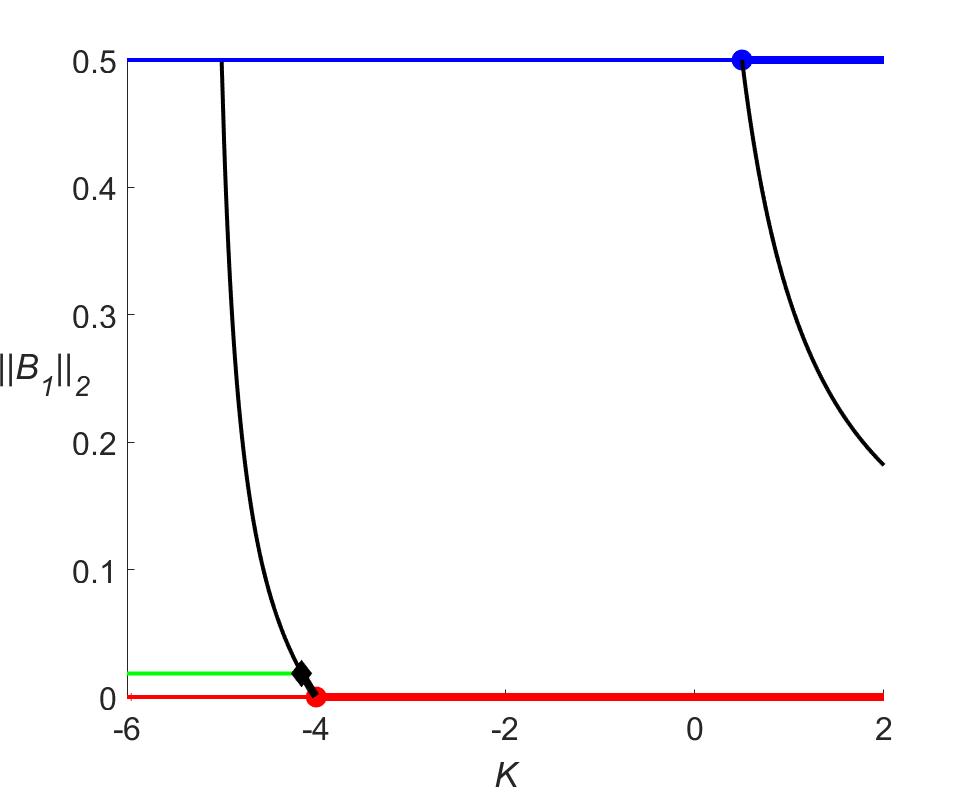}
      \end{minipage}
    \hspace{.3cm}
     \begin{minipage}{.3\textwidth}
	\centering
	          \hspace{.3cm} \small{$T=1$}\\
	\includegraphics[scale=0.12]{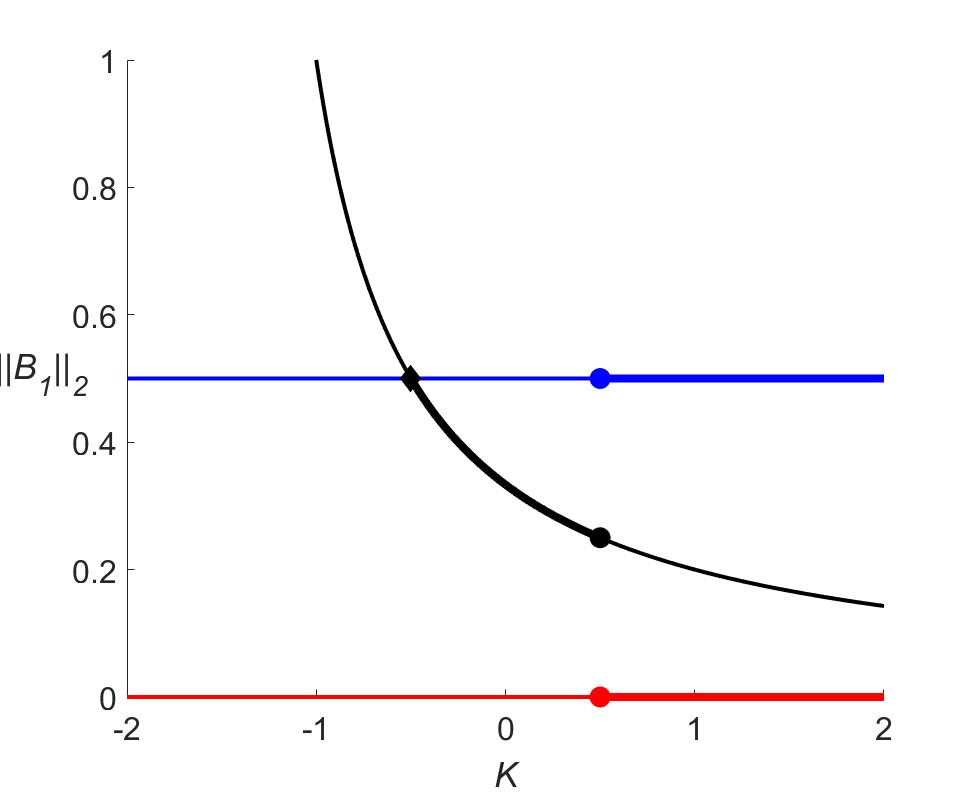}
      \end{minipage}
      \hspace{.3cm}
    \begin{minipage}{.3\textwidth}
	\centering
	          \hspace{.3cm} \small{$T=1.5$}\\
	\includegraphics[scale=0.12]{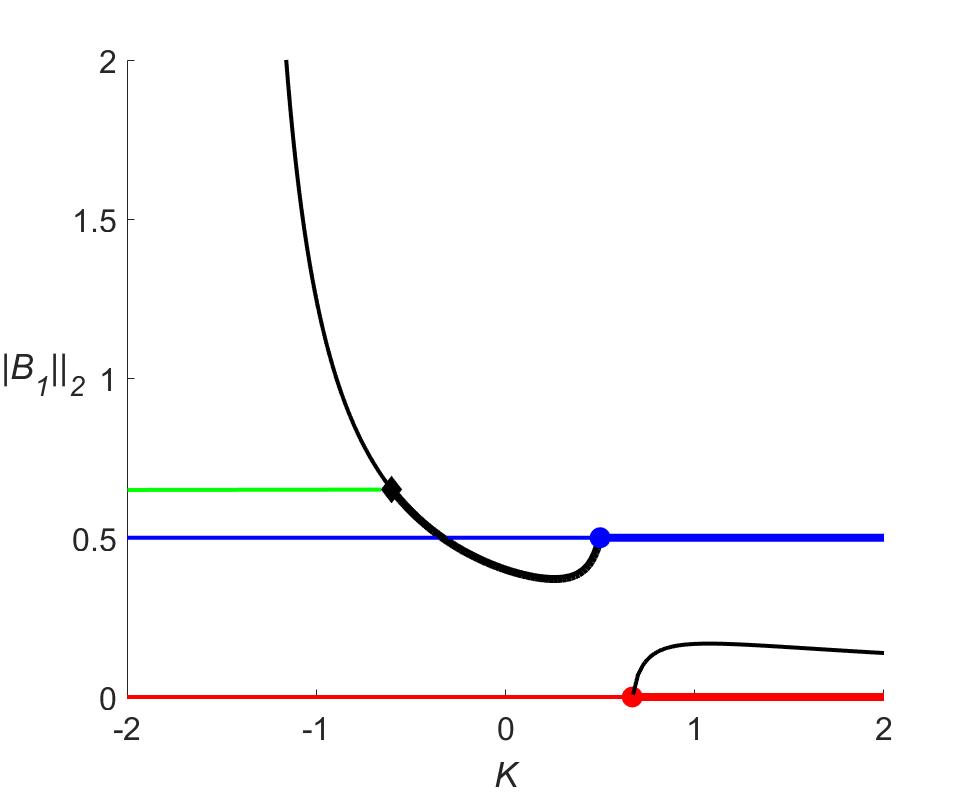}
      \end{minipage}
      \caption{Bifurcation diagrams in the 2D region $(K,\Vert B_1 \Vert_2)$ for system~\eqref{8} obtained with the aid of pde2path~\cite{pde2path} for $T=0.1,1,1.5$ and an arbitrary fixed value of $D_B$. Thick lines indicate asymptotic stability, while thin lines refer to instability in $L^2-$norm of the steady states $E_1$ (red), $E_2$ (blue), and $E_3$ (black). Diamonds indicate Hopf bifurcation points, while circles indicate transcritical bifurcations. The green branches, originating from the Hopf bifurcation points, refer to nonuniform unstable  steady-state solutions.}
   \label{fig:bifdiagPDE}      
      \end{figure}

\section{Pattern formation and numerical simulations} \label{sec:6}

We investigate the dynamics of our model with respect to the its main parameters by performing numerical simulations considering both spatial homogeneity and heterogeneity. In particular, we analyse the effect of $T$ on the evolution of the temporal model \eqref{8ODE} and the influence of $T$ and $D_B$ on the formation of spatio-temporal patterns for system \eqref{8}.

\subsection{ODE simulations} \label{ssec:numODE}

Figure \ref{fig:ODET} shows the effect of $T$ on the evolution of system \eqref{8ODE} in $(B_1,B_2)-$phase space. To this aim, we fix $K=-1$ and $S=0.5$, in order to focus our attention in the proximity of the Hopf bifurcation point for $T=1,\,1.5$. In agreement with the analytical results obtained in Section \ref{ssec:4.3}, when $T=1$ (Fig.~\ref{fig:ODET}, left panel) the two plant species coexist, and converge - after an initial transient phase - to a state where they \textit{symmetrically oscillate}. In fact, as shown in Fig.~\ref{fig:ODET}, the limit cycle (continuous line in the left panel) is symmetric with respect to the diagonal in $(B_1,B_2)-$space. When $T=1.5$, numerical simulations still show the emergence of a \sout{periodic} solution where both species coexist, but the maximum value reached by $B_1$ is higher than the one of $B_2$ (Fig.~\ref{fig:ODET}, right panel). This is due to the fact that species 2 suffers because of its auto--toxicity more than species 1 when $T>1$. We point out that both scenarios occur for negative values of $K$, which correspond to a facilitation in the inter-specific interactions between biomass and toxicity.

The initial condition used in the numerical simulations are $B_1=0.4$ and $B_2=0.6$ and no auto-toxicity for both species at $t=0$. System \eqref{8ODE} was integrated using MATLAB R2012b (the MathWorks) with a variable-order solver (ode15s) based on the numerical differentiation formulas (NDFs), particularly efficient with stiff problems \cite{Shampine_1997}.

\begin{figure}[H]
    \centering
    \includegraphics[scale=0.5]{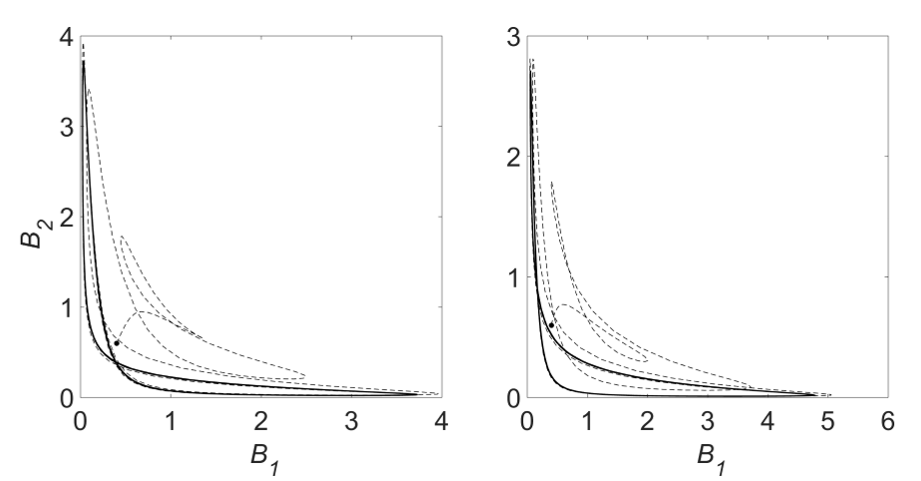}
    \caption{Temporal evolution of system \eqref{8ODE} in $(B_1,B_2)-$phase space. The left panel is obtained for $T=1$, while $T=1.5$ in the right panel. The other parameters values are $K=-1$ and $S=0.5$. At $t=0$, the initial conditions are $B_1=0.4$, $B_2=0.6$, $T_1=T_2=0$. The black dot corresponds to the initial datum, the dashed curve represents the transient phase, and the continuous curve indicated the limit cycle.}
    \label{fig:ODET}
\end{figure}

\subsection{PDE simulations and spatio--temporal patterns} \label{ssec:numPDE}

In this section, we illustrate the numerical results related to the simulation of system~\eqref{8}. We focus our attention on two scenarios which are particularly interesting from the ecological viewpoint: species coexistence and oscillatory behavior. To this aim, we analyze the effect of the parameters $T$ and $D_B$ on the evolution of the system, and fix the value of the other parameters $K=-1$ and $S=0.5$. 
All simulations were performed using zero-flux Neumann boundary conditions on a square lattice of $100 \times 100$ units. The initial conditions considered here for $B_1$ and $B_2$ are two Gaussian distributions with peak $0.25$ in the top-right and bottom-left corner of the domain, respectively, while $T_1$ and $T_2$ are both uniformly zero. All simulations are performed for 500 time steps.
The system~\eqref{8} were integrated using MATLAB R2012b (the MathWorks). The numerical scheme is the following: for the biomass equations we apply a FD scheme forward in time and centered (second order) in the space, whereas the toxicity equations are solved by a first order accurate forward Euler scheme. We check a specific stability criterion, that limits the amplification error in the computed solution ad presented in \cite{Campagna_2018}.
Figure \ref{fig:pannello_DT} shows in each point of the spatial domain the normalized value of $B_1-B_2$ (i.e., the difference between the two biomass densities) at the final time step $t_{\text{max}}=500$ for $T=1,\,1.5$ and $D_B=0.1,\,1,\,2$. As revealed by both Fig.~\ref{fig:pannello_DT} and VIDEO\_1-VIDEO\_6 in the supplementary material, system \eqref{8} exhibits  spatio-temporal patterns of $B_1$ and $B_2$. In particular, when $T=1$ and $D_B=0.1,2$, a scenario occurs where the whole domain is entirely covered by $B_1$ and $B_2$, continuously alternating (see VIDEO\_1 and VIDEO\_3). In all other cases, spatio-temporal patterns in the form of \textit{spirals} are occurring (see VIDEO\_2, and VIDEO\_4-VIDEO\_6).
As $D_B$ increases, the qualitative features of the observed dynamics do not change; however, the system converges to such patterned states with a faster rate.

\begin{figure}
    \centering
    \includegraphics[scale=0.5]{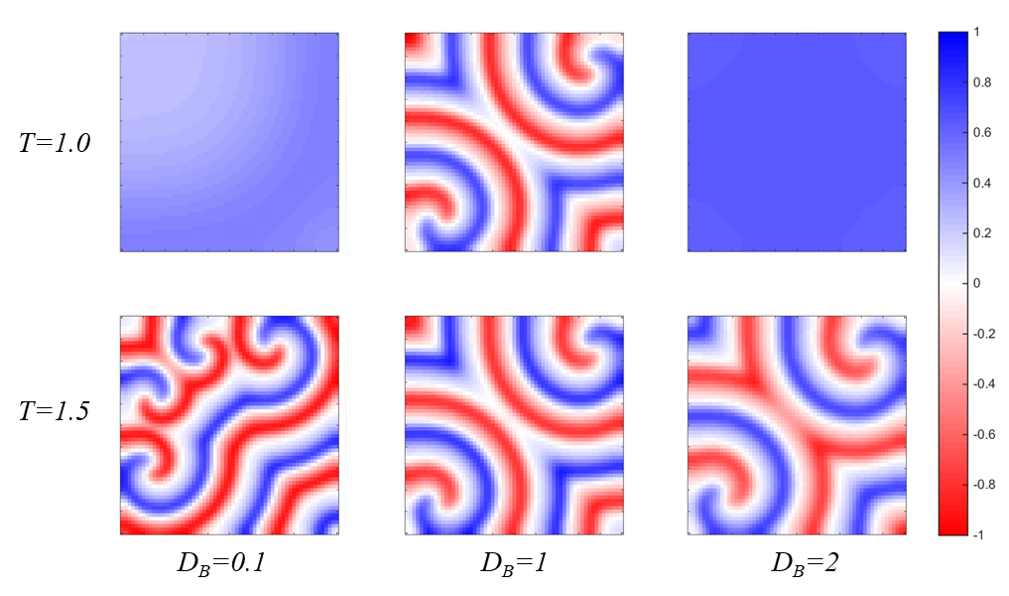}
    \caption{Spatio-temporal patterns of vegetation formed at time $t_{\text{max}} = 500$ for selected values of $T$ (rows) and $D_B$ (columns). The blue-red scale map represents the intensity of the difference between the two biomass densities $B_1 - B_2$, where blue corresponds to the case where $B_1 \gg B_2$ and red indicated $B_2 \gg B_1$). Other parameter values are set to $K=-1$, $S=0.5$. At $t=0$, $B_1$ and $B_2$ correspond to two Gaussian distributions with peak $0.25$ located at the the top-right and bottom-left corner of the domain, respectively. On the other hand, both $T_1$ and $T_2$ are uniformly zero.
}
    \label{fig:pannello_DT}
\end{figure}

In Figure \ref{fig:pannello_tT} we focus our attention on the temporal evolution of the system \eqref{8} starting with the same initial conditions considered in Fig.~\ref{fig:pannello_DT}. Our aim is here to show how the patterns evolve in time. In this case, we limit ourselves to considering $K=-1$, $S=0.5$, and $D_B=0.1$.
Coherently with what stated above, when $T=1$ the system converges to a state where both biomasses $B_1$ and $B_2$ alternate continuously. For $T=1.5$, on the other hand, spiral patterns and waves occur. Both dynamic scenarios persist for longer times, as confirmed by VIDEO\_1 and VIDEO\_4 available in the supplementary material.

\begin{figure}
    \centering
    \includegraphics[scale=0.6]{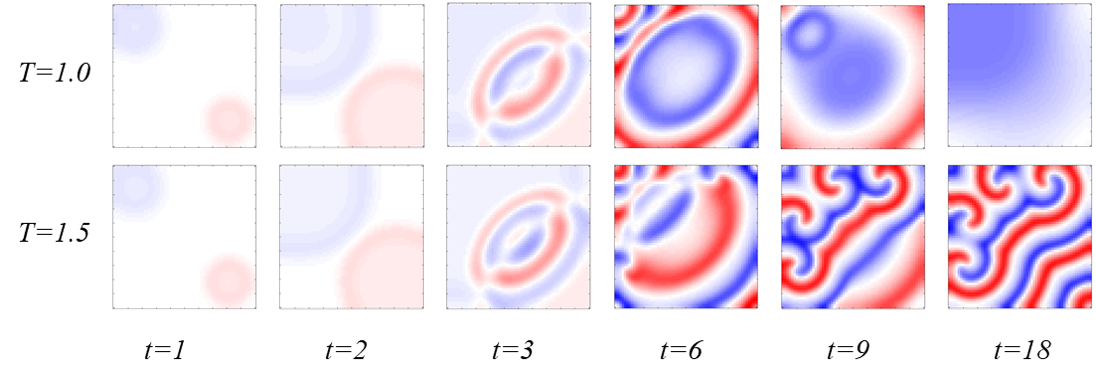}
    \caption{Temporal evolution of system \eqref{8}. The row panels show two-dimensional maps of the normalized difference between the two biomass densities for $T=1,\,1.5$. Other parameter values are $K=-1$, $S=0.5$, and $D_B=0.1$. The initial conditions and the colour scale are the same as in Fig.~\ref{fig:pannello_DT}.}
    \label{fig:pannello_tT}
\end{figure}

\section{Conclusions and research perspectives}

In this paper, we analyzed the effect of both intra- and inter-specific plant-soil feedback induced by toxicity in a spatial model for two plant species. Our theoretical investigation shows the existence of a rich spectrum of ecological scenarios, such as competitive exclusion, stable coexistence, and spatio-temporal patterns. We have analyzed existence and stability of steady-states with respect to all parameters, with a particular focus on the symmetrical scenario $T=1$. A combination of analytical and numerical methods has allowed us to construct bifurcation diagrams both for the temporal model \eqref{8ODE} and for the full spatial model \eqref{8}, as well as to detect Hopf and transcritical bifurcations. The numerical analysis carried out with the aid of the software pde2path has lead to the conclusion that different diffusion coefficients do not alter the structure of the bifurcation diagram. 
Moreover, numerical simulations on a 2D domain reveal the occurrence of spatio-temporal patterns in a domain of influence of the Hopf bifurcation on varying the parameters $T$ and $D_B$.

From an ecological viewpoint, the evolution of biomass and toxicity in the same environment for some parameter ranges induces the formation of spatio-temporal patterns and waves, as already observed in \cite{Carteni_2012,Marasco_2014,Valentin_1999}. In these patterns the species are continuously alternating in space and time without reaching a stable configuration. Moreover, our results reveal that inter-specific plant-soil feedback can lead to stable coexistence of two plant species, whereas the absence of such feedback ($K=0$) would imply competitive exclusion (see, for instance, the case $T=0.1$). In other words, inter-specific plant-soil feedback provides a facilitation mechanism and enlarges the range of species coexistence and then biodiversity. 

In this work, we focused our attention on humid environments, where the intensity of vegetation-water feedbacks is less prominent. In the future, we plan to extend the spectrum of ecological scenarios considered by our model including arid environments where water dynamics must be take into account by means of an additional equation. Finally, an additional  research aim is to generalize our model by considering the interaction among  $n$ species, with $n>2$.


\section*{Appendix A. Theorem 4.1 of \cite{Chavez_2004}}
\label{AppendixA}

Let us consider a general system of ODEs with a parameter $\phi$:
\begin{equation}
\dot{x} = f(x,\phi), \qquad f: \mathbb{R}^n \times \mathbb{R}, \quad f \in C^2(\mathbb{R}^n \times \mathbb{R}). \tag{A1}
\end{equation} \label{ccsys} 
Without loss of generality, we assume that $x = 0$ is an equilibrium for (A1), i.e. $f(0,\phi)=0$ for all $\phi$.

\begin{theorem*}[Theorem 4.1, of \cite{Chavez_2004}] \label{thm:cc}
Assume:
\begin{enumerate}
    \item[(a1)] $A=D_xf(0,0)$ is the linearization matrix of system (A1) around the equilibrium $x=0$ with $\phi$ evaluated at $0$. Zero is a simple eigenvalue of $A$ and all other eigenvalues of $A$ have negative real parts;
    \item[(a2)] Matrix $A$ has a (nonnegative) right eigenvector $\mathbf{w}$ and a left eigenvector $\mathbf{v}$ corresponding to the zero eigenvalue.
\end{enumerate}
Let $f_k$ denote the $k$th component of $f$, and
\begin{equation}\label{A2b}
    a = \sum_{k,i,j=1}^n v_k w_i w_j \frac{\partial^2 f_k}{\partial x_i \partial x_j}(0,0), \qquad b = \sum_{k,i=1}^n v_k w_i \frac{\partial^2 f_k}{\partial x_i \partial \phi}(0,0).\tag{A2}
\end{equation}
Then, the local dynamics of system (A1) around $x=0$ are totally determined by $a$ and $b$.
\begin{enumerate}
    \item[i.] $a>0$, $b>0$. When $\phi < 0$ with $|\phi| \ll 1$, $0$ is locally asymptotically stable, and there exists a positive unstable equilibrium; when $0 < \phi \ll 1$, $0$ is unstable and there exists a negative and locally asymptotically stable equilibrium.
    \item[ii.] $a<0$, $b<0$. When $\phi < 0$ with $|\phi| \ll 1$, $0$ is unstable; when $0 < \phi \ll 1$, $0$ is locally asymptotically stable and there exists a negative unstable equilibrium\footnote{In the version of Theorem 4.1 of~\cite{Chavez_2004} reported here, we have corrected the typo present in the original paper, where the unstable equilibrium is said to be positive.}.
    \item[iii.] $a>0$, $b<0$. When $\phi < 0$ with $|\phi| \ll 1$, $0$ is unstable, and there exists a locally asymptotically stable negative equilibrium; when $0 < \phi \ll 1$, $0$ is stable and a positive unstable equilibrium appears.
    \item[iv.] $a<0$, $b>0$. When $\phi$ changes from negative to positive, $0$ changes its stability from stable to unstable. Correspondingly a negative unstable equilibrium becomes positive and locally asymptotically stable. 
\end{enumerate}
\end{theorem*}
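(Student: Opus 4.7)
The plan is to reduce system (A1) to its one-dimensional center manifold near $(x,\phi)=(0,0)$ and then read off the four cases from a scalar quadratic normal form. By assumption (a1), $A$ has a simple zero eigenvalue with the rest of the spectrum strictly stable, so I would first append $\phi$ as an auxiliary state governed by $\dot\phi=0$. The augmented linearisation then has a two-dimensional center subspace spanned by the right eigenvector $\mathbf{w}$ and the $\phi$-direction, with the remaining spectrum strictly stable. The standard center manifold theorem supplies a locally invariant $C^{2}$ graph
\begin{equation*}
x=\mathbf{w}\,u+h(u,\phi),\qquad h(0,0)=0,\quad Dh(0,0)=0,
\end{equation*}
parametrised by the scalar coordinate $u=\mathbf{v}\cdot x$, where the left eigenvector $\mathbf{v}$ is normalised by $\mathbf{v}\cdot\mathbf{w}=1$.

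Next I would project the vector field onto this manifold to obtain the reduced equation
\begin{equation*}
\dot u=\mathbf{v}\cdot f\bigl(\mathbf{w}u+h(u,\phi),\phi\bigr),
\end{equation*}
and Taylor-expand at the origin. The identity $f(0,\phi)\equiv 0$ eliminates every pure-$\phi$ term; $A\mathbf{w}=0$ eliminates the $u$-linear term; and because $h$ starts at order $(|u|+|\phi|)^{2}$, it cannot contribute to the quadratic part of the expansion. The surviving quadratic part is then precisely
\begin{equation*}
\dot u=b\,\phi\,u+\tfrac{a}{2}\,u^{2}+O\bigl((|u|+|\phi|)^{3}\bigr),
\end{equation*}
with $a$ and $b$ equal to the sums in (A2), obtained from the second derivatives of $\mathbf{v}\cdot f$ contracted with $\mathbf{w}$ and with $\mathbf{w}\otimes\mathbf{w}$, respectively.

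Finally I would analyse this scalar equation. The zero branch $u=0$ coexists with the transcritical branch $u^{*}(\phi)=-2b\phi/a+O(\phi^{2})$; linearisation gives stability exponent $b\phi$ at $u=0$ (stable iff $b\phi<0$) and exponent $-b\phi$ at $u^{*}$, so the two branches always have opposite stability. Going through the four combinations of signs of $(a,b)$, and translating ``$u>0$'' into ``positive equilibrium'' by using the nonnegativity of $\mathbf{w}$ from (a2) so that $x\approx\mathbf{w}u$ inherits the sign of $u$ to leading order, reproduces cases (i)--(iv) verbatim.

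The main obstacle I expect is the bookkeeping at the projection step: one must verify that the correction $h(u,\phi)$ is genuinely of order $(|u|+|\phi|)^{2}$ (so that it only perturbs the cubic remainder) and maintain the normalisation $\mathbf{v}\cdot\mathbf{w}=1$ consistently, so that the coefficients of the reduced equation agree with the explicit formulas (A2) rather than differing by an overall scalar. Once the normal form is in hand, items (i)--(iv) are a routine sign inspection of a scalar ODE.
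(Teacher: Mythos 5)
Your argument is correct and complete in outline: the paper itself does not prove this statement --- it is imported verbatim (with a corrected typo) from Castillo-Chavez and Song \cite{Chavez_2004} --- and your center-manifold reduction to the scalar normal form $\dot u = b\phi u + \tfrac{a}{2}u^2 + O\bigl((|u|+|\phi|)^3\bigr)$, followed by the sign analysis of the two branches $u=0$ and $u^*=-2b\phi/a$, is precisely the proof given in that reference. The only point worth tightening is your claim that $h(u,\phi)$ ``cannot contribute to the quadratic part'' merely because it is of second order: the term $Ah$ is itself quadratic in $(u,\phi)$, and it drops out only because the projection satisfies $\mathbf{v}A=0$; with that observation made explicit, the identification of the coefficients with the sums in (A2) and the four-case inspection go through exactly as you describe.
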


\begin{remark}
As illustrated in Remark 1 in~\cite{Chavez_2004}, we observe that if the equilibrium of interest in the above theorem is a non negative equilibrium $x_0$, then the requirement that $\mathbf{w}$ is non negative is not necessary. When some components in $\mathbf{w}$ are negative, one can still apply the theorem provided that $w_j > 0 $ whenever $(x_0)_j = 0$; instead, if $(x_0)_j>0$, then $w_j$ need not to be positive. Here $w_j$ and $(x_0)_j$ denote the $j$th components of $\mathbf{w}$ and $\mathbf{x}_0$, respectively.
\end{remark}

\begin{acknowledgements}
AI acknowledges financial support from the Austrian Academy of Sciences {\"O}AW via the Multiscale modeling and simulation of crowded transport in the life and social sciences Group NST-0001. The authors would also like to thank Hannes Uecker for the fruitful discussions concerning the implementation of the bifurcation analysis with pde2path.
\end{acknowledgements}

\section*{Conflict of interest}
On behalf of all authors, the corresponding author states that there is no conflict of interest.

\bibliographystyle{spmpsci}
\bibliography{references}

\providecommand{\noopsort}[1]{}
\begin{thebibliography}{10}
\providecommand{\url}[1]{{#1}}
\providecommand{\urlprefix}{URL }
\expandafter\ifx\csname urlstyle\endcsname\relax
  \providecommand{\doi}[1]{DOI~\discretionary{}{}{}#1}\else
  \providecommand{\doi}{DOI~\discretionary{}{}{}\begingroup
  \urlstyle{rm}\Url}\fi

\bibitem{Baudena_2010}
Baudena, M., D'Andrea, F., Provenzale, A.: An idealized model for tree-grass
  coexistence in savannas: the role of life stage structure and fire
  disturbances.
\newblock J. Ecol. \textbf{98}(1), 74--80 (2010).
\newblock \doi{10.1111/j.1365-2745.2009.01588.x}.
\newblock \urlprefix\url{https://doi.org/10.1111%2Fj.1365-2745.2009.01588.x}

\bibitem{Baudena_2012}
Baudena, M., Rietkerk, M.: Complexity and coexistence in a simple spatial model
  for arid savanna ecosystems.
\newblock Theoretical Ecology \textbf{6}(2), 131--141 (2012).
\newblock \doi{10.1007/s12080-012-0165-1}.
\newblock \urlprefix\url{https://doi.org/10.1007%2Fs12080-012-0165-1}

\bibitem{Bever_1994}
Bever, J.: {Feeback between Plants and Their Soil Communities in an Old Field
  Community}.
\newblock Ecology \textbf{75}(7), 1965--1977 (1994).
\newblock \doi{10.2307/1941601}.
\newblock \urlprefix\url{https://doi.org/10.2307%2F1941601}

\bibitem{Bonanomi_2005}
Bonanomi, G., Giannino, F., Mazzoleni, S.: Negative plant-soil feedback and
  species coexistence.
\newblock Oikos \textbf{111}(2), 311--321 (2005).
\newblock \doi{10.1111/j.0030-1299.2005.13975.x}.
\newblock \urlprefix\url{https://doi.org/10.1111%2Fj.0030-1299.2005.13975.x}

\bibitem{Bonanomi_2011}
Bonanomi, G., Incerti, G., Barile, E., Capodilupo, M., Antignani, V., Mingo,
  A., Lanzotti, V., Scala, F., Mazzoleni, S.: {Phytotoxicity, not nitrogen
  immobilization, explains plant litter inhibitory effects: evidence from
  solid-state 13C {NMR} spectroscopy}.
\newblock New Phytologist \textbf{191}(4), 1018--1030 (2011).
\newblock \doi{10.1111/j.1469-8137.2011.03765.x}.
\newblock \urlprefix\url{https://doi.org/10.1111%2Fj.1469-8137.2011.03765.x}

\bibitem{Bonanomi_2014}
Bonanomi, G., Incerti, G., Stinca, A., Carten{\`{\i}}, F., Giannino, F.,
  Mazzoleni, S.: Ring formation in clonal plants.
\newblock Community Ecology \textbf{15}(1), 77--86 (2014).
\newblock \doi{10.1556/comec.15.2014.1.8}.
\newblock \urlprefix\url{https://doi.org/10.1556%2Fcomec.15.2014.1.8}

\bibitem{Campagna_2018}
Campagna, R., Cuomo, S., Giannino, F., Severino, G., Toraldo, G.: {A
  Semi-Automatic Numerical Algorithm for Turing Patterns Formation in a
  Reaction-Diffusion Model}.
\newblock {IEEE} Access \textbf{6}, 4720--4724 (2018).
\newblock \doi{10.1109/access.2017.2780324}.
\newblock \urlprefix\url{https://doi.org/10.1109%2Faccess.2017.2780324}

\bibitem{Carteni_2012}
Carten{\`{\i}}, F., Marasco, A., Bonanomi, G., Mazzoleni, S., Rietkerk, M.,
  Giannino, F.: Negative plant soil feedback explaining ring formation in
  clonal plants.
\newblock J. Theor. Biol. \textbf{313}, 153--161 (2012).
\newblock \doi{10.1016/j.jtbi.2012.08.008}.
\newblock \urlprefix\url{https://doi.org/10.1016%2Fj.jtbi.2012.08.008}

\bibitem{Chavez_2004}
Castillo-Chavez, C., Song, B.: Dynamical models of tuberculosis and their
  applications.
\newblock Math. Biosci. Eng. \textbf{1}(2), 361--404 (2004).
\newblock \doi{10.3934/mbe.2004.1.361}.
\newblock \urlprefix\url{https://doi.org/10.3934%2Fmbe.2004.1.361}

\bibitem{Darwin_1909}
Darwin, C.: {The Voyage of the Beagle}.
\newblock P. F. Collier {\&} Son, (1909).
\newblock \doi{10.5962/bhl.title.98662}.
\newblock \urlprefix\url{https://doi.org/10.5962%2Fbhl.title.98662}

\bibitem{Eigentler_2019}
Eigentler, L., Sherratt, J.: Spatial self-organisation enables species
  coexistence in a model for savanna ecosystems.
\newblock arXiv preprint arXiv:1911.10801  (2019)

\bibitem{Eigentler_2019_2}
Eigentler, L., Sherratt, J.A.: Metastability as a coexistence mechanism in a
  model for dryland vegetation patterns.
\newblock Bull. Math. Biol. \textbf{81}(7), 2290--2322 (2019).
\newblock \doi{10.1007/s11538-019-00606-z}.
\newblock \urlprefix\url{https://doi.org/10.1007%2Fs11538-019-00606-z}

\bibitem{Eppinga_2018}
Eppinga, M., Baudena, M., Johnson, D., Jiang, J., Mack, K., Strand, A., Bever,
  J.: Frequency-dependent feedback constrains plant community coexistence.
\newblock Nature Ecology {\&} Evolution \textbf{2}(9), 1403--1407 (2018).
\newblock \doi{10.1038/s41559-018-0622-3}.
\newblock \urlprefix\url{https://doi.org/10.1038%2Fs41559-018-0622-3}

\bibitem{Gilad_2007}
Gilad, E., von Hardenberg, J., Provenzale, A., Shachak, M., Meron, E.: A
  mathematical model of plants as ecosystem engineers.
\newblock Journal of Theoretical Biology \textbf{244}(4), 680--691 (2007).
\newblock \doi{10.1016/j.jtbi.2006.08.006}.
\newblock \urlprefix\url{https://doi.org/10.1016%2Fj.jtbi.2006.08.006}

\bibitem{von_Hardenberg_2001}
von Hardenberg, J., Meron, E., Shachak, M., Zarmi, Y.: {Diversity of Vegetation
  Patterns and Desertification}.
\newblock Physical Review Letters \textbf{87}(19) (2001).
\newblock \doi{10.1103/physrevlett.87.198101}.
\newblock \urlprefix\url{https://doi.org/10.1103%2Fphysrevlett.87.198101}

\bibitem{Hassard_82}
Hassard, B., Kazarinoff, N., Wan, Y., Wan, Y.: Theory and applications of
  {Hopf} bifurcation, vol.~41.
\newblock CUP Archive (1981)

\bibitem{Kealy_2011}
Kealy, B., Wollkind, D.: {A Nonlinear Stability Analysis of Vegetative Turing
  Pattern Formation for an Interaction{\textendash}Diffusion Plant-Surface
  Water Model System in an Arid Flat Environment}.
\newblock Bull. Math. Biol. \textbf{74}(4), 803--833 (2011).
\newblock \doi{10.1007/s11538-011-9688-7}.
\newblock \urlprefix\url{https://doi.org/10.1007%2Fs11538-011-9688-7}

\bibitem{Kielhoefer_2004}
Kielh{\"o}fer, H.: Bifurcation theory: {An} introduction with applications to
  {PDEs}, vol. 156.
\newblock Springer Science \& Business Media (2006)

\bibitem{Klausmeier_1999}
Klausmeier, C.: {Regular and Irregular Patterns in Semiarid Vegetation}.
\newblock Science \textbf{284}(5421), 1826--1828 (1999).
\newblock \doi{10.1126/science.284.5421.1826}.
\newblock \urlprefix\url{https://doi.org/10.1126%2Fscience.284.5421.1826}

\bibitem{Kulmatiski_2008}
Kulmatiski, A., Beard, K., Stevens, J., Cobbold, S.: Plant-soil feedbacks: a
  meta-analytical review.
\newblock Ecology Letters \textbf{11}(9), 980--992 (2008).
\newblock \doi{10.1111/j.1461-0248.2008.01209.x}.
\newblock \urlprefix\url{https://doi.org/10.1111%2Fj.1461-0248.2008.01209.x}

\bibitem{Liu_1994}
Liu, W.: Criterion of hopf bifurcations without using eigenvalues.
\newblock J. Math. Anal. Appl., \textbf{182}(1), 250--256 (1994).
\newblock \doi{10.1006/jmaa.1994.1079}.
\newblock \urlprefix\url{https://doi.org/10.1006%2Fjmaa.1994.1079}

\bibitem{Marasco_2014}
Marasco, A., Iuorio, A., Carten{\`{\i}}, F., Bonanomi, G., Tartakovsky, D.,
  Mazzoleni, S., Giannino, F.: {Vegetation Pattern Formation Due to
  Interactions Between Water Availability and Toxicity in
  Plant{\textendash}Soil Feedback}.
\newblock Bull. Math. Biol. \textbf{76}(11), 2866--2883 (2014).
\newblock \doi{10.1007/s11538-014-0036-6}.
\newblock \urlprefix\url{https://doi.org/10.1007%2Fs11538-014-0036-6}

\bibitem{Mazzoleni_2007}
Mazzoleni\noopsort{l}, S., Bonanomi, G., Giannino, F., Rietkerk, M., Dekker,
  S., Zucconi, F.: {Is plant biodiversity driven by decomposition processes? An
  emerging new theory on plant diversity}.
\newblock Community Ecology \textbf{8}(1), 103--109 (2007).
\newblock \doi{10.1556/comec.8.2007.1.12}.
\newblock \urlprefix\url{https://doi.org/10.1556%2Fcomec.8.2007.1.12}

\bibitem{Mazzoleni_2010}
Mazzoleni\noopsort{m}, S., Bonanomi, G., Giannino, F., Incerti, G., Dekker, S.,
  Rietkerk, M.: Modelling the effects of litter decomposition on tree diversity
  patterns.
\newblock Ecol. Model. \textbf{221}(23), 2784--2792 (2010).
\newblock \doi{10.1016/j.ecolmodel.2010.08.007}.
\newblock \urlprefix\url{https://doi.org/10.1016%2Fj.ecolmodel.2010.08.007}

\bibitem{Mazzoleni_2014}
Mazzoleni\noopsort{n}, S., Bonanomi, G., Incerti, G., Chiusano, M., Lanzotti,
  V., et~al.: Inhibitory and toxic effects of extracellular self-{DNA} in
  litter: a mechanism for negative plant-soil feedbacks?
\newblock New Phytologist \textbf{205}(3), 1195--1210 (2014).
\newblock \doi{10.1111/nph.13121}.
\newblock \urlprefix\url{https://doi.org/10.1111%2Fnph.13121}

\bibitem{Meron_2012}
Meron, E.: Pattern-formation approach to modelling spatially extended
  ecosystems.
\newblock Ecol. Model. \textbf{234}, 70--82 (2012).
\newblock \doi{10.1016/j.ecolmodel.2011.05.035}.
\newblock \urlprefix\url{https://doi.org/10.1016%2Fj.ecolmodel.2011.05.035}

\bibitem{van_der_Putten_1993}
van~der Putten\noopsort{a}, W., Dijk, C.V., Peters, B.: Plant-specific
  soil-borne diseases contribute to succession in foredune vegetation.
\newblock Nature \textbf{362}(6415), 53--56 (1993).
\newblock \doi{10.1038/362053a0}.
\newblock \urlprefix\url{https://doi.org/10.1038%2F362053a0}

\bibitem{Rietkerk_2002}
Rietkerk\noopsort{e}, M., Boerlijst, M., van Langevelde, F., HilleRisLambers,
  R., van~de Koppel, J., Kumar, L., Prins, H., de~Roos, A.: Self-organization
  of vegetation in arid ecosystems.
\newblock The American Naturalist \textbf{160}(4), 524--530 (2002).
\newblock \doi{10.1086/342078}.
\newblock \urlprefix\url{https://doi.org/10.1086%2F342078}

\bibitem{Shampine_1997}
Shampine, L., Reichelt, M.: The {MATLAB} {ODE} suite.
\newblock SIAM J. Sci. Comput. \textbf{18}(1), 1--22 (1997).
\newblock \doi{10.1137/s1064827594276424}.
\newblock \urlprefix\url{https://doi.org/10.1137%2Fs1064827594276424}

\bibitem{Sherratt_2005}
Sherratt, J.: {An Analysis of Vegetation Stripe Formation in Semi-Arid
  Landscapes}.
\newblock Journal of Mathematical Biology \textbf{51}(2), 183--197 (2005).
\newblock \doi{10.1007/s00285-005-0319-5}.
\newblock \urlprefix\url{https://doi.org/10.1007%2Fs00285-005-0319-5}

\bibitem{pde2path}
Uecker, H.: {Hopf Bifurcation and Time Periodic Orbits} with pde2path
  {\textendash} {Algorithms and Applications}.
\newblock Commun. Comput.Phys. \textbf{25}(3) (2019).
\newblock \doi{10.4208/cicp.oa-2017-0181}.
\newblock \urlprefix\url{https://doi.org/10.4208%2Fcicp.oa-2017-0181}

\bibitem{Valentin_1999}
Valentin, C., d’Herbes, J., Poesen, J.: Soil and water components of banded
  vegetation patterns.
\newblock Catena \textbf{37}(1--2), 1--24 (1999).
\newblock \doi{10.1016/S0341-8162(99)00053-3}

\bibitem{Walter_1971}
Walter, H., Mueller-Dombois, D., et~al.: Ecology of tropical and subtropical
  vegetation.
\newblock 581.5264 W3. Oliver \& Boyd Edinburgh (1971)

\bibitem{Weigelt_2003}
Weigelt, A., Jolliffe, P.: Indices of plant competition.
\newblock Journal of Ecology \textbf{91}(5), 707--720 (2003).
\newblock \doi{10.1046/j.1365-2745.2003.00805.x}.
\newblock \urlprefix\url{https://doi.org/10.1046%2Fj.1365-2745.2003.00805.x}

\bibitem{Willig_2003}
Willig, M., Kaufman, D., Stevens, R.: {Latitudinal Gradients of Biodiversity:
  Pattern, Process, Scale, and Synthesis}.
\newblock Annual Review of Ecology, Evolution, and Systematics \textbf{34}(1),
  273--309 (2003).
\newblock \doi{10.1146/annurev.ecolsys.34.012103.144032}.
\newblock
  \urlprefix\url{https://doi.org/10.1146%2Fannurev.ecolsys.34.012103.144032}

\end{thebibliography}

\end{document}